\documentclass[pdflatex,sn-nature]{sn-jnl}

\usepackage{physics}
\usepackage{graphicx}%
\usepackage{multirow}%
\usepackage{amsmath,amssymb,amsfonts}%
\usepackage{amsthm}%
\usepackage{mathrsfs}%
\usepackage[title]{appendix}%
\usepackage{acronym}
\usepackage{xcolor}%
\usepackage{textcomp}%
\usepackage{manyfoot}%
\usepackage{booktabs}%
\usepackage{algorithm}%
\usepackage{algorithmicx}%
\usepackage{algpseudocode}%
\usepackage{listings}%
\usepackage{stfloats}%
\usepackage{bbm}
\usepackage{bbold}
\usepackage[caption=false,font=normalsize,labelfont=sf,textfont=sf]{subfig}%
\usepackage{mathrsfs}
\usepackage{tabularx}

\theoremstyle{thmstyleone}%
\newtheorem{theorem}{Theorem}
\newtheorem{proposition}{Proposition}
\newtheorem{lemma}{Lemma}
\newtheorem{corollary}{Corollary}
\theoremstyle{thmstyletwo}%
\theoremstyle{thmstylethree}%
\newtheorem{definition}{Definition}%

\newacro{ILP}{integer linear program}
\newacro{QNT}{quantum network tomography}
\newacro{QFI}{Quantum Fisher information}
\newacro{QFIM}{quantum Fisher information matrix}
\newacro{QCRB}{quantum Cramér-Rao bound}
\newacro{CSQP}{cyclic sequential QNT protocol}
\newacro{MLE}{maximum likelihood estimator}
\newacro{MSE}{mean squared error}
\newacro{POVMs}{Positive Operator-Valued Measures}

\begin{document}

\title{Learnability, Identifiability, and Monitor Placement in Quantum Network Tomography}

\author*[1]{\fnm{Athira} \sur{Kalavampara Raghunadhan}}\email{kalavama@tcd.ie}\equalcont{These authors contributed equally to this work.}
\author[2]{\fnm{Matheus} \sur{Guedes De Andrade}}\email{mguedesdeand@umass.edu}\equalcont{These authors contributed equally to this work.}
\author[2]{\fnm{Don} \sur{Towsley}}\email{towsley@cs.umass.edu}
\author[3]{\fnm{Indrakshi} \sur{Dey}}\email{Indrakshi.Dey@waltoninstitute.ie}
\author[1]{\fnm{Daniel} \sur{ Kilper}}\email{dan.kilper@tcd.ie}
\author[1]{\fnm{Nicola} \sur{Marchetti}}\email{nicola.marchetti@tcd.ie}

\affil*[1]{\orgdiv{CONNECT Research Centre, School of Engineering}, \orgname{Trinity College Dublin}, \country{Ireland}}

\affil[2]{\orgdiv{Manning College of Information and Computer Science}, \orgname{University of Massachusetts}, \city{Amherst},  \country{USA}}

\affil[3]{\orgdiv{Walton Institute}, \orgname{South East Technological University}, \city{Waterford},  \country{Ireland}}

\abstract{Reliable quantum communication requires accurate characterization of the quantum links. This paper studies Quantum Network Tomography (QNT) under limited monitoring resources, where unknown link parameters are inferred from path-based measurements performed at monitor nodes. We introduce a cyclic sequential QNT protocol (CSQP) for arbitrary network topologies and develop a fixed-point learnability framework with an explicit algorithm for estimating link-level Werner parameters. We characterize identifiability through the rank of the path-link incidence matrix and show that the CSQP learnability conditions guarantee full rank and a nonsingular Quantum Fisher Information Matrix (QFIM). Building on this framework, we formulate monitor placement and measurement assignment as an optimization problem whose constraints enforce learnability and identifiability without topology-specific reformulation. Two Integer Linear Programming (ILP) formulations are introduced: Unconstrained QFIM-based formulation (QF), which maximizes QFIM-trace, and monitoring-overhead constrained QFIM formulation (QMF), which maximizes QFIM-trace subject to a per-monitor overhead constraint. Both formulations are evaluated on star and tree networks to compare monitor placements and measurement assignments. The results show that QMF distributes monitoring load evenly across all monitors and provides greater potential for parallel monitoring under resource constraints, while QF is more suitable when estimation information is prioritized, particularly in practical networks with non-uniform link noise.
}
\keywords{Quantum Network Tomography, Monitor Placement, Learnability, Identifiability, Quantum Fisher Information Matrix, Quantum Cramér-Rao Bound, Maximum Likelihood Estimator, Integer Linear Program, Monitoring-overhead.}

\maketitle
\section*{Introduction}\label{sec1}
Quantum communication networks aim to distribute quantum states and entanglement between distant nodes through interconnected quantum links, enabling applications such as quantum key distribution \cite{bennett2014quantum}, distributed quantum computing \cite{buhrman2003distributed}, and quantum sensing \cite{zhang2021distributed}. Accurate characterization of quantum links is thus essential for evaluating link reliability, selecting appropriate communication paths, optimizing resource allocation, and ensuring that transmitted quantum states maintain the required fidelity.

However, quantum link characterization is challenging because direct access to every internal link or component of a network may be impractical, especially as networks increase in size and complexity. In classical networks, network tomography provides scalable and resource-efficient techniques for characterizing internal links using measurements performed at network monitors, without requiring direct measurements of every network component~\cite{adams2000use, networktomography}. When the available budget does not allow monitors to be deployed at all network nodes, the problem shifts to the optimal placement of a given number of monitors. In already deployed networks, the nodes that can act as monitors may be fixed, whereas in networks under design, monitor locations can be selected to improve tomography performance~\cite{6848079}.

Quantum Network Tomography (QNT) integrates principles from classical network tomography and quantum process tomography to enable path-based estimation of link parameters~\cite{math, starcharacterization, qnt, paper}. 
In QNT, networks are probed through entanglement distribution, generating entangled states encoding error parameters of the traversed links among network monitors, which are capable of performing quantum measurements for parameter estimation. However, path-based inference introduces a fundamental challenge: the multiplicative coupling of link parameters can lead to non-identifiability, where distinct parameter sets produce identical measurement statistics. To address this, monitors must be strategically placed in the network, and probing strategies must be carefully designed to ensure the unique identifiability of all link parameters. Also, to quantify the quality of such estimation, it is essential to evaluate the achievable precision.

Early studies introduced multi-party state-distribution protocols for tree and star networks, considered single-Pauli and depolarizing channels, and employed the Quantum Fisher Information Matrix (QFIM) and Quantum Cramér-Rao Bound (QCRB) to study identifiability and estimation precision~\cite{math, starcharacterization, qnt}. Building on these foundations, our earlier work investigated single-monitor placement in a four-node depolarizing star network~\cite{10821219}, while our recent work compared practical measurement strategies, including local Z-basis, joint Bell-state, and pre-shared entanglement-assisted measurements~\cite{paper}. However, existing studies do not provide a unified framework that jointly addresses link identifiability, monitor placement, link assignment, and monitoring overhead for general network topologies. This work fills this gap by establishing topology-independent learnability and identifiability conditions and formulating monitor placement and link assignment as QFIM-based optimization problems, with both information-maximizing and overhead-constrained formulations for scalable QNT.

Classical network tomography has extensively studied link identifiability and monitor placement from end-to-end measurements, including rank-based identifiability conditions and monitor placement strategies that maximize the number of identifiable links \cite{6848079, 7524374}. These works provide the classical foundation for our formulation. However, QNT introduces additional requirements associated with quantum probe-state generation, quantum measurements, and quantification of parameter estimation information. Our work builds on the classical identifiability and placement principles while incorporating these quantum-specific aspects into a unified monitor placement and measurement assignment framework.

We generalize the QNT framework in which monitors are not limited to end nodes but may be placed anywhere in the network, with measurement ability independent of their position in the network to handle more intricate configurations and make the following key contributions.

\begin{itemize}
    \item We propose a \ac{CSQP} and introduce a fixed-point learnability framework with an explicit algorithm for estimating link-level Werner parameters according to their learning order from monitor measurement outcomes.
    \item We characterize \ac{CSQP} identifiability through the rank of the path-link incidence matrix and prove that the learnability conditions guarantee full rank and a nonsingular \ac{QFIM} for arbitrary network topologies.
    \item We derive closed-form expressions of the Maximum Likelihood Estimators (MLEs) for the link parameters in star networks under a given monitor placement configuration and show that, for an interior feasible joint solution of the considered two-hop indirect probes, the shared-link MLE reduces to the direct-probe estimate.
    \item We address the problem of optimal monitor placement in general quantum network topologies by formulating it as an Integer Linear Program (ILP). Our analysis examines two distinct formulations: (i) monitoring-overhead constrained QFIM formulation (QMF), that maximizes QFIM-trace under monitoring-overhead constraints, and (ii) unconstrained QFIM based formulation (QF) that focuses solely on maximizing QFIM-trace.   
    \item We prove that the proposed optimization constraints enforce \ac{CSQP} learnability and identifiability on arbitrary network topologies without requiring topology-specific reformulation.
    \item Finally, we numerically evaluate the proposed ILP formulations on star and tree networks, analyzing the resulting monitor placement and measurement assignments, and compare direct monitoring-enforced and non-enforced assignments to assess their impact on the QCRB under both QF and QMF.
\end{itemize}

\section*{Methods}
We model the network as a graph $G=(V,E)$, where nodes represent a set of quantum processors and edges $E$ denote the set of quantum links that interconnect the processors in $V$. Nodes in $V$ are divided into \textit{intermediate} and \textit{monitor} nodes. Intermediate nodes can generate bipartite entanglement with their neighbors and perform entanglement swapping operations to generate end-to-end entanglement. Monitor nodes, denoted by $M \subseteq V$, have the same capabilities of intermediate nodes with the additional capability of performing quantum measurements to estimate link parameters. The model assumes that all errors that occur throughout end-to-end entanglement generation appear as quantum state transmission errors over network links~\cite{schumacher1996sending}, such that the entangled states generated through link $e_i \in E$ are corrupted by the two-qubit depolarizing channel
\begin{align}
    \mathcal{E}_{i}(\rho) = w_i\rho + (1-w_i) \frac{\mathbb{I}_4}{4}
\end{align}
of parameter $w_i \in (0, 1)$~\cite{PhysRevA.80.042330}.

Thus, states generated through link $e_i$ are Werner states of form
\begin{align}
    & \rho(w_i) = w_i \, |\Phi^+\rangle \langle \Phi^+| + (1 - w_i)\frac{\mathbb{I}_4}{4}
\end{align}
where $\ket{\Phi^+} = (\ket{00} + \ket{11})/{\sqrt{2}}$ and $\mathbb{I}_4$ is the $4\times 4$ identity matrix. Werner states are commonly used to model entangled states generated through noisy links, considered as a worst-case scenario for practical purposes \cite{barnuM1998information, werner1, werner2} as the channel outputs the maximally mixed state with probability $(1 - w_i)$.

\subsection*{Probe State Distribution and Measurement}
\begin{figure*}[t]
\centering
\subfloat[\label{fig1a}]{\includegraphics[scale=0.35]{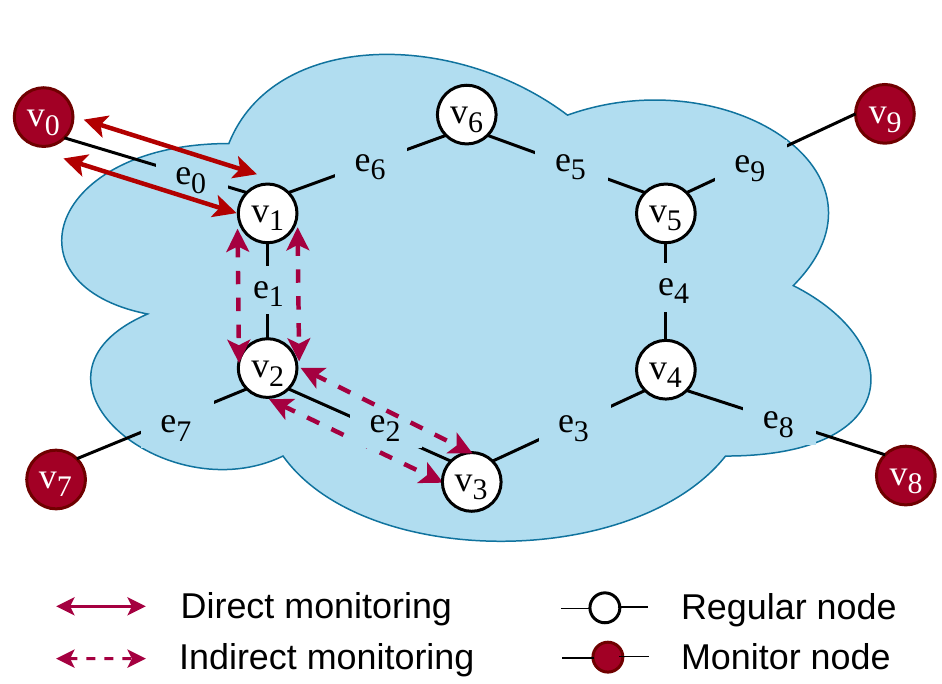}}%
\hfil
\subfloat[\label{fig1b}]{\includegraphics[scale=0.4]{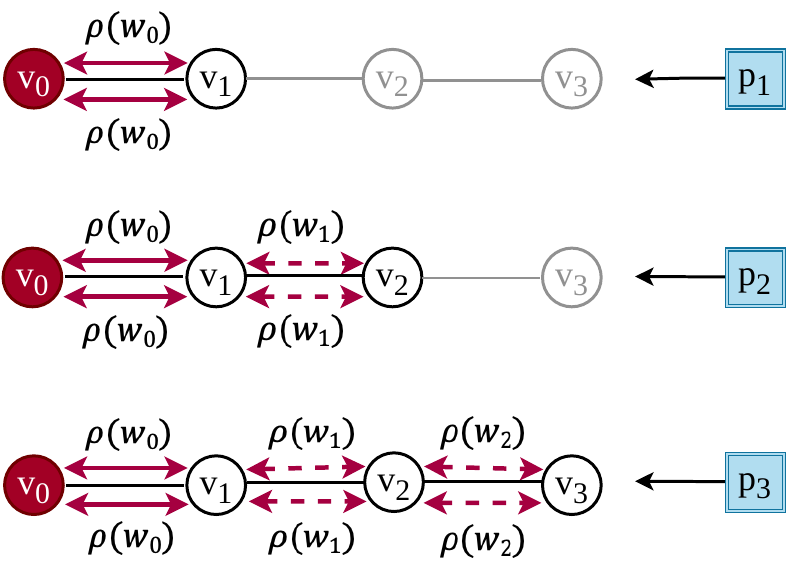}}%
\caption{Probe-state distribution. (a) Link $e_0$ is directly monitored by monitor node $v_0$, shown by solid red arrows. Links $e_1$ and $e_2$ are indirectly monitored by $v_0$, shown by dashed red arrows. (b) Probes $\rho(w_0^2)$, $\rho(w_0^2w_1^2)$, and $\rho(w_0^2w_1^2w_2^2)$ are used to estimate the links $e_0, e_1$ and $e_2$ respectively, along the path $v_0-v_1-v_2-v_3$, and are denoted by $p_1$, $p_2$ and $p_3$, respectively. $p_1$ is generated at $v_0$ by distributing two Werner states between $v_0$ and $v_1$ followed by BSM at $v_1$. $p_2$ is generated at $v_0$ by distributing two Werner states between both $v_0$-$v_1$ and $v_1$-$v_2$ followed by two BSMs at $v_1$ and one BSM at $v_2$. $p_3$ is generated at $v_0$ by distributing two Werner states between $v_0$-$v_1$, $v_1$-$v_2$ and $v_2$-$v_3$ followed by two BSMs at each intermediate node, $v_1$ and $v_2$, and one BSM at $v_3$.}
\label{fig1}
\end{figure*}

Probe state generation has the objective of generating Werner states at monitor nodes for estimation. In particular, it corresponds to the generation of an entangled state in a monitor node that can be measured locally in the monitor. In this framework, each probe is created through link-level entanglement generation and entanglement swapping over network paths that contain one or more network links.

Let $ P_{ab} = [v_a, \ldots, v_b]$ be a path with $l$ links connecting nodes $v_a$ and $v_b$, and assume that a monitor is placed at $v_a$. Each link $e_i = (v_k, v_{j}) \in P_{ab}$ is parameterized by $w_i$, representing the quality of the link through its connection with entanglement fidelity, $f_{\Phi}(w_i) = \frac{1 + 3w_i}{4}$ \cite{paper}. To probe path $P_{ab}$, we generate two copies of the Werner states $\rho(\prod_{i \in P_{ab}} w_i)$ between $v_a$ and $v_b$ through entanglement distribution over the links of $P_{ab}$. The two copies of the end-to-end Werner states are combined through a BSM operation performed at $v_b$, generating the Werner state
\begin{align}
    \rho_{ab} = \rho\!\left( \prod_{i \in P_{ab}} w_i^{\,2} \right) \label{eq:probeState}
\end{align}
at $v_a$. A final BSM is performed locally at $v_a$, yielding classical outcomes that can be used for parameter estimation. Each cyclic probe associated with a path $P_{ab}$ requires two end-to-end entangled states, and therefore involves $2|P_{ab}|$ link-level entanglement
generations, in addition to the required entanglement swapping operations.
Thus, the physical resource cost of a probe increases with its path length. Probe generation is illustrated in Fig. \ref{fig1} for three different paths $P_{01} = [v_0, v_1], P_{02} = [v_0, v_1, v_2],$ and $P_{03} = [v_0, v_1, v_2, v_3]$, which yield the probe states $\rho(w_0^2), \rho(w_0^2w_1^2),$ and $\rho(w_0^2w_1^2w_2^2)$ respectively.

This probing procedure can be visualized as performing quantum communication through network cycles formed by the concatenation of an ordered path $P_{ab}$ with its reversed version $P_{ba}$. In particular, states prescribed by \eqref{eq:probeState} can be seen as the result of end-to-end entanglement generation through a virtual path $P_{ab} \cup P_{ba}$, where the key property is the quadratic dependency of $\rho_{ab}$ with the Werner parameters associated with all the links in $P_{ab}$. Due to this property, we refer to this probing scheme as \textit{cyclic probing} and to states following \eqref{eq:probeState} as \textit{cyclic probes}.

A link $e_i = (v_j, v_k)$ is \textit{directly monitored} if there is a monitor placed either in $v_j$ or $v_k$ and the single-edge path $P_{jk} = e_i$ is used to generate the probe state $\rho_{jk} = \rho(w_i^2)$. Link $e_i$ is \textit{indirectly monitored} through a path $P_{ak} = [v_a,\ldots, v_j, v_k]$ if there is a monitor at node $v_a$ and $P_{ak}$ is used to generate a probe state $\rho_{ak}$ following \eqref{eq:probeState}. Directly and indirectly monitored links are illustrated in Fig. \ref{fig1} with solid and dashed arrows, respectively. 

The construction of probes that generate a parameter-dependent quantum state that is local to a single network node is of key importance. Previous QNT approaches focused on the generation of probe states that spanned multiple monitors (end-nodes) at once~\cite{math, starcharacterization, qnt}, i.e, multipartite entangled states shared by multiple network nodes, or single-qubit probe states distributed to single monitors~\cite{wang2025quantum}. Cyclic probes differ significantly from these other methods, allowing optimal measurements for parameter estimation to be performed without the requirement of pre-shared end-to-end entanglement, a necessary resource for the aforementioned multipartite distribution approaches. Moreover, cyclic probing is the entanglement distribution version of classical cycle-based routing~\cite{gopalan2011identifying, networktomography}, a connection that we later explore in this paper to derive a QNT protocol for networks of arbitrary topologies.

Cyclic probes allow for the estimation of end-to-end Werner parameters for probing paths. In particular, projective measurements of $\rho_{ab}$ allow estimating the end-to-end Werner parameter $w_{ab} = \prod_{i \in P_{ab}} w_i$. Suppose that $N_{ab}$ copies of $\rho_{ab}$ are used to estimate $w_{ab}$. Since a BSM at $v_a$ is prescribed to obtain data for estimation, the estimator
\begin{equation}\label{eq:e2eWernerMLE}
    \hat{w}_{ab} = \sqrt{\frac{4 N_{00}^{ab} - N^{ab}}{3 N^{ab}}},
\end{equation}
where $N_{00}^{ab}$ denotes the number of BSM outcomes corresponding to $\ket{\Phi^{+}}$, is a \ac{MLE} for $w_{ab}$. The prescription of BSMs to extract estimation data from cyclic probes is not arbitrary. In fact, the BSM is an optimal measurement for the estimation of $w_{ab}$ in terms of the \ac{QCRB}. This follows directly from the fact that Werner states are diagonal in the Bell basis~\cite{math,qfim}. Moreover, since the probes are independent, their Fisher information is additive, so the total classical FIM equals the \ac{QFIM} $F_Q^{\mathcal{P}}$ used in this work. Hence, the usual incompatibility of multi-parameter quantum estimation does not arise, and the corresponding \ac{QCRB} is asymptotically attainable.

\subsection*{Quantum Network Tomography}\label{sec:QNT}
QNT protocols provide estimators capable of identifying link-level error parameters of all network links based on data obtained from measurements performed at monitors~\cite{starcharacterization,qnt}. Following the QNT framework, a protocol is described by a set of probe states $\mathcal{P}$, a set of \ac{POVMs} $\Pi$ used to extract information from these probe states, and estimators that combine POVM results to estimate link parameters. We now propose the \ac{CSQP} family, based on cyclic probing and capable of identifying Werner parameters in networks of arbitrary topologies. \ac{CSQP}s are the quantum analogues of the classical cyclic routing network tomography protocols~\cite{networktomography}.

Given a network graph $G$, assume that the monitor nodes $M \subseteq V$ are fixed. For cyclic probing, there exists a one-to-one correspondence between a path $P_{ab}$ and a probe state $\rho_{ab}$. Thus, let $\mathcal{P}$ denote a set of network paths. Every path $P \in \mathcal{P}$ terminates in a unique link, denoted by $\tau(P)$, and we write $P^{-} = P \setminus \{\tau(P)\}$ for the (possibly empty) \textit{prefix} of $P$. A path with $P^{-} = \emptyset$ is a direct probe of $\tau(P)$. Throughout we adopt the standing assumption $w \in (0,1)^{|E|}$, i.e., no link is completely depolarizing. This assumption is necessary rather than technical: a link with $w_l = 0$ makes every path containing it uninformative about all remaining links on that path.

We define learnability constructively as the least fixed point of a monotone set operator. This makes the definition well posed and supplies both an estimation algorithm and an identifiability proof that are valid on arbitrary topologies.

\begin{definition}[Learnable set]\label{def:learnable}
For any subset $L\subseteq E$ of links that have already been learned, let $\Gamma_{\mathcal P}:2^E\rightarrow 2^E$ be the operator
\begin{align}
    \Gamma_{\mathcal{P}}(L) \;=\; \big\{\, \tau(P) \;:\; P \in \mathcal{P},\; P^{-} \subseteq L \,\big\}.
\end{align}
Define $L_{0} = \emptyset$ and $L_{t+1} = L_{t} \cup \Gamma_{\mathcal{P}}(L_{t})$ for $t \geq 0$, where $L_t$ denotes the set of links learned by the end of learning round $t$. The \emph{learnable set} of $\mathcal{P}$ is
\begin{align}
    \mathbb{L}(\mathcal{P}) \;=\; \bigcup_{t \geq 0} L_{t},
\end{align}
and a link $e_i$ is \emph{learnable in $\mathcal{P}$} if $e_i \in \mathbb{L}(\mathcal{P})$. For $e_i \in \mathbb{L}(\mathcal{P})$ we call $t(e_i) = \min\{t \geq 1 : e_i \in L_{t}\}$ the \emph{learning round} of $e_i$.
\end{definition}

The fixed-point construction is equivalent to forward chaining on the definite Horn implications $P^{-}\rightarrow\tau(P)$ associated with the probe paths: once all links in $P^{-}$ are learnable, $\tau(P)$ becomes learnable. Consequently, the learnable set can be computed by standard forward propagation. Moreover, learnability is monotone with respect to the probe set: if $\mathcal{P}\subseteq\mathcal{P}'$, then $\mathbb{L}(\mathcal{P})\subseteq\mathbb{L}(\mathcal{P}')$. Thus, adding probe paths cannot make an already learnable link unlearnable.

Because $\Gamma_{\mathcal{P}}$ is monotone and $E$ is finite, the sequence $(L_{t})_{t \geq 0}$ is non-decreasing and stabilizes after at most $|E|$ steps. Hence $\mathbb{L}(\mathcal{P}) = L_{|E|}$ is the least fixed point of $L \mapsto L \cup \Gamma_{\mathcal{P}}(L)$ and is computable in polynomial time. Crucially, $\mathbb{L}(\mathcal{P})$ is defined from $\mathcal{P}$ alone and makes no reference to the \ac{CSQP} properties, so those properties can be stated in terms of it without circularity.

Link learnability connects with parameter identifiability through the following Lemma.

\begin{lemma}\label{lemma:learnability}
Let $w \in (0,1)^{|E|}$. If $e_i \in \mathbb{L}(\mathcal{P})$, then $w_i$ is identifiable from the measurement statistics of the probes in $\mathcal{P}$, and the corresponding estimator produced by Algorithm~\ref{alg:CSQPestimate} is consistent.
\end{lemma}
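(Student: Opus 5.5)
We argue by strong induction on the learning round $t(e_i)$ from Definition~\ref{def:learnable}, proving for every $t\ge 1$ that each $w_i$ with $e_i\in L_t$ is (a) a continuous function of the vector of end-to-end parameters $\{w_P\}_{P\in\mathcal P}$, where $w_P=\prod_{j\in P}w_j$, and (b) estimated consistently by the plug-in rule of Algorithm~\ref{alg:CSQPestimate}. The bridge from measurement statistics to $w_P$ comes first. The BSM on the cyclic probe $\rho(w_P^2)$ yields the $\ket{\Phi^+}$ outcome with probability $(1+3w_P^2)/4$. Since $w_P\in(0,1)$, this map is injective, so $w_P$ is identifiable from the outcome distribution of that probe. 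By the strong law of large numbers, $N_{00}^{P}/N^{P}\to(1+3w_P^2)/4$ almost surely, so the estimator \eqref{eq:e2eWernerMLE} converges to $w_P$ by the continuous mapping theorem. Its argument under the square root is eventually positive because $w_P>0$.

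For the base case $t=1$, take $e_i\in L_1=\Gamma_{\mathcal P}(\emptyset)$. Then some $P\in\mathcal P$ has $P^-=\emptyset$ and $\tau(P)=e_i$, so $w_i=w_P$, and both claims follow from the previous paragraph.

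For the inductive step, suppose $e_i\in L_{t+1}\setminus L_t$. Then there is $P\in\mathcal P$ with $\tau(P)=e_i$ and $P^-\subseteq L_t$. By the induction hypothesis, every $w_j$ with $e_j\in P^-$ is identifiable and has a consistent estimator $\hat w_j$. Because $\prod_{j\in P^-}w_j\in(0,1)$ is strictly positive, we can write
\begin{equation*}
w_i=\frac{w_P}{\prod_{j\in P^-}w_j}.
\end{equation*}
This identity shows that $w_i$ is determined by quantities already shown to be determined by the probe statistics. The Algorithm then sets $\hat w_i=\hat w_P/\prod_{j\in P^-}\hat w_j$. The map $(x,y_1,\dots,y_k)\mapsto x/\prod_j y_j$ is continuous at any point with all $y_j>0$, and the joint vector of estimators converges almost surely because finitely many almost-sure limits hold simultaneously. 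The continuous mapping theorem therefore gives $\hat w_i\to w_i$ almost surely. To keep the estimator in the parameter space one may clip it to $(0,1)$; this does not affect consistency because the true value is interior.

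The main obstacle is making the estimator well defined when several probe paths could teach the same link, and ensuring the recursion uses only earlier-learned links. I will fix, for each learnable link, one witnessing path chosen at its first learning round, which is presumably what Algorithm~\ref{alg:CSQPestimate} does. Any other valid choice is also consistent by the same argument. Because the rounds are well founded, at most $|E|$ of them, the recursion terminates and the induction covers all of $\mathbb L(\mathcal P)$. The standing assumption $w\in(0,1)^{|E|}$ is used twice: for the injectivity of $w_P\mapsto(1+3w_P^2)/4$, and for the nonvanishing denominators.
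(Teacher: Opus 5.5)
Your proposal is correct and follows the paper's proof essentially step for step. Both use strong induction on the learning round. The base case rests on the injectivity of $w_P \mapsto (1+3w_P^2)/4$ together with the law of large numbers and the continuous mapping theorem. The inductive step goes through the ratio $w_i = w_P/\prod_{j\in P^-} w_j$ with a strictly positive denominator.

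Your extra care about which witness path is used is well placed. Algorithm~\ref{alg:CSQPestimate} does not in fact freeze the first witness: it checks $P^-\subseteq L$ against the updated $L$ within a pass, and it can overwrite $\hat w_i$ when a second eligible path ends in $e_i$. Your remark that any such well-founded choice is consistent by the same induction, run over the order in which the algorithm assigns estimates, already covers this.
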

\begin{proof}
We argue by strong induction on the learning round $t(e_i)$.

Let $t(e_i) = 1$. Then $e_i = \tau(P)$ for some $P \in \mathcal{P}$ with $P^{-} = \emptyset$, i.e., $P$ is the single-link path $e_i$ and the associated cyclic probe is $\rho(w_i^{2})$. Its Bell-basis outcome probabilities are $(1+3w_i^{2})/4$ for $\ket{\Phi^{+}}$ and $(1-w_i^{2})/4$ for each remaining outcome, which is strictly monotone in $w_i$ on $(0,1)$. Hence $w_i$ is identified and \eqref{eq:e2eWernerMLE} is consistent by the law of large numbers applied to multinomial outcome counts together with the continuous mapping theorem.

Let us assume the claim for every link with the learning round at most $s$ and let $t(e_i) = s+1$. By Definition~\ref{def:learnable} there is a \emph{witness path} $P \in \mathcal{P}$ with $\tau(P) = e_i$ and $P^{-} \subseteq L_{s}$. Therefore, every $e_l \in P^{-}$ satisfies $t(e_l) \leq s$ and, by the induction hypothesis, $w_l$ is identified and consistently estimated. The probe associated with $P$ is $\rho(X_{P})$ with $X_{P} = \prod_{e_l \in P} w_l^{2}$, so $X_{P}$, and hence $w_{P} = \sqrt{X_{P}} = w_i \prod_{e_l \in P^{-}} w_l$, is identified as in the base case. Since $w_l > 0$ for all $l$, the prefix product $w_{P^{-}} = \prod_{e_l \in P^{-}} w_l$ is strictly positive and
\begin{align}
    w_i \;=\; \frac{w_{P}}{w_{P^{-}}} \label{eq:wratio}
\end{align}
identifies $w_i$. Consistency of the plug-in estimator $\hat{w}_i = \hat{w}_{P}/\hat{w}_{P^{-}}$ follows from the consistency of $\hat{w}_{P}$ and of each $\hat{w}_l$, $e_l \in P^{-}$, together with the continuous mapping theorem, the denominator converging to a strictly positive limit.
\end{proof}

The \ac{CSQP} family is composed of QNT protocols with probe sets $\mathcal{P}$ for which the following properties hold:
\begin{enumerate}
    \item all paths in $\mathcal{P}$ start on a monitor node;
    \item for every link $e_i = (v_b, v_k) \in E$ there exists one path that ends with $e_i$, a path which can be the link itself and, therefore every link is either directly or indirectly monitored in $\mathcal{P}$;
    \item every path in $\mathcal{P}$ has a learnable prefix, i.e., $P^{-} \subseteq \mathbb{L}(\mathcal{P})$ for every $P \in \mathcal{P}$, with $\mathbb{L}(\mathcal{P})$ given by Definition~\ref{def:learnable}. Equivalently, $P_{ak} = [v_a,\ldots,v_l,v_b,v_k] \in \mathcal{P}$ only if all links of $P_{ab} = [v_a,\ldots,v_l,v_b]$ are learnable in $\mathcal{P}$.
\end{enumerate}
Properties 2 and 3 together are equivalent to requiring $\mathbb{L}(\mathcal{P})=E$, since every link terminates a path whose prefix is learnable. We retain these properties separately because they explicitly describe the probe structure used by the estimation procedure.
The \ac{CSQP} properties are purely combinatorial: they constrain only the incidence structure of $\mathcal{P}$, i.e., path membership, terminal links, and prefixes, and impose no requirement on the topology of $G$. They are nevertheless sufficient to estimate the Werner parameter of every network link, a result formalized in the following Theorem.

\begin{theorem}\label{th:CSQPIdentifiability}
Let $G=(V,E)$ be an arbitrary network and let $\mathcal{P}$ be a \ac{CSQP} over $G$. Then
\begin{enumerate}
    \item $\mathcal{P}$ contains at least one direct probe;
    \item $\mathbb{L}(\mathcal{P}) = E$, i.e., every link of $G$ is learnable;
    \item Algorithm~\ref{alg:CSQPestimate} terminates after at most $|E|$ passes of its outer loop and returns consistent estimates $\hat{w}_i$ for all $e_i \in E$. The corresponding learning order can be implemented in $O\!\left(\sum_{P\in\mathcal{P}} |P|\right)$ time using unresolved-prefix counters and a worklist of newly learned links.
\end{enumerate}
\end{theorem}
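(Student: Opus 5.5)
We prove the three claims in order, using only Definition~\ref{def:learnable}, the CSQP properties, and Lemma~\ref{lemma:learnability}.

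\emph{Claim 1.} Assume $E\neq\emptyset$. By Property~2 some path $P\in\mathcal P$ exists, and by Property~3 its prefix satisfies $P^{-}\subseteq\mathbb L(\mathcal P)$. Suppose, for contradiction, that every path in $\mathcal P$ has nonempty prefix. Then $\Gamma_{\mathcal P}(\emptyset)=\emptyset$, since no $P$ satisfies $P^{-}\subseteq\emptyset$. Hence $L_1=L_0=\emptyset$, and by induction $L_t=\emptyset$ for all $t$, so $\mathbb L(\mathcal P)=\emptyset$. This contradicts Property~3, because the nonempty set $P^{-}$ would have to be contained in $\emptyset$. Therefore some $P$ has $P^{-}=\emptyset$, i.e.\ $\mathcal P$ contains a direct probe.

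\emph{Claim 2.} Take any $e_i\in E$. Property~2 gives a path $P$ with $\tau(P)=e_i$, and Property~3 gives $P^{-}\subseteq\mathbb L(\mathcal P)=L_{|E|}$. Since $L_{|E|}$ is a fixed point of $L\mapsto L\cup\Gamma_{\mathcal P}(L)$, we get $e_i\in\Gamma_{\mathcal P}(L_{|E|})\subseteq L_{|E|}$. Thus $\mathbb L(\mathcal P)=E$.

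\emph{Claim 3.} The plan is to show that the outer loop of Algorithm~\ref{alg:CSQPestimate} reproduces the rounds $L_t$. I take the algorithm to operate as follows: in each pass it estimates $\tau(P)$ for every $P$ whose prefix is already estimated, using the ratio \eqref{eq:wratio}. Under that reading, the set of estimated links after pass $t$ is exactly $L_t$. The sequence $L_t$ is non-decreasing and either grows strictly or stabilizes, so it reaches $L_{|E|}=E$ within $|E|$ passes, which gives termination. Consistency of every $\hat w_i$ then follows directly from Lemma~\ref{lemma:learnability}, since each $e_i\in E=\mathbb L(\mathcal P)$.

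For the running time, the learning order is computed by forward chaining on Horn clauses.
\begin{enumerate}
\item For each $P\in\mathcal P$, initialize a counter $c_P=|P^{-}|$.
\item Build an index from each link to the paths whose prefix contains it; this costs $O(\sum_P|P|)$.
\item Seed a worklist with the terminals $\tau(P)$ of all paths with $c_P=0$.
\item When a link is popped for the first time, mark it learned (computing $\hat w$ by the ratio) and decrement $c_P$ for each indexed path. Whenever some $c_P$ hits $0$ and $\tau(P)$ is still unlearned, push $\tau(P)$.
\end{enumerate}
Each link is processed at most once, and each prefix incidence is decremented at most once, so the total work is $O(\sum_P|P|)$. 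Computing each prefix product costs $O(|P^{-}|)$, which is absorbed into the same bound. Correctness, meaning the worklist learns exactly $\mathbb L(\mathcal P)$, is the standard equivalence of forward chaining with the least fixed point.

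\emph{Main obstacle.} The delicate step is matching the pass structure of Algorithm~\ref{alg:CSQPestimate} to the rounds $L_t$. This depends on the algorithm's exact description, which I have assumed above. If a pass updates estimates in place, a link could be learned earlier than its round. That only speeds things up: the pass-$t$ learned set is then a superset of $L_t$ and is still contained in $\mathbb L(\mathcal P)$, so the $|E|$ bound still holds. One minor point also needs care: when several witness paths exist for a link, any one of them may be chosen, and consistency holds regardless of which is used.
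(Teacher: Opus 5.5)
Your proposal is correct and follows essentially the same route as the paper. Part (i) uses the same contradiction via $\Gamma_{\mathcal P}(\emptyset)=\emptyset$, part (ii) uses the fixed-point inclusion $e_i\in\Gamma_{\mathcal P}(\mathbb L(\mathcal P))\subseteq\mathbb L(\mathcal P)$, and part (iii) combines a pass-by-round comparison, Lemma~\ref{lemma:learnability} for consistency, and unresolved-prefix counters with a worklist for the $O(\sum_{P\in\mathcal P}|P|)$ bound. The only cosmetic difference is that the paper counts the direct-probe initialization separately and states $L\supseteq L_{t+1}$ after the $t$-th pass of the repeat loop, which is your in-place/superset observation.
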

\begin{proof}
\emph{(i)} Suppose, for contradiction, that no path of $\mathcal{P}$ is a direct probe, i.e., $P^{-} \neq \emptyset$ for every $P \in \mathcal{P}$. Then $\Gamma_{\mathcal{P}}(\emptyset) = \emptyset$, hence $L_{t} = \emptyset$ for all $t$ and $\mathbb{L}(\mathcal{P}) = \emptyset$. Property~3 then forces $P^{-} \subseteq \emptyset$ for every $P \in \mathcal{P}$, i.e., every path is a direct probe, contradicting the assumption. Property~2 guarantees $\mathcal{P} \neq \emptyset$ whenever $E \neq \emptyset$, so the argument is not vacuous. Hence $L_{1} \neq \emptyset$.

\emph{(ii)} Let $e_i \in E$ be arbitrary. By Property~2 there exists $P \in \mathcal{P}$ with $\tau(P) = e_i$, and by Property~3 we have $P^{-} \subseteq \mathbb{L}(\mathcal{P})$. Since $\mathbb{L}(\mathcal{P})$ is a fixed point of $L \mapsto L \cup \Gamma_{\mathcal{P}}(L)$,
\begin{align}
    e_i = \tau(P) \in \Gamma_{\mathcal{P}}\big(\mathbb{L}(\mathcal{P})\big) \subseteq \mathbb{L}(\mathcal{P}).
\end{align}
As $e_i$ was arbitrary, $E \subseteq \mathbb{L}(\mathcal{P}) \subseteq E$, which proves the claim. Note that this step is where Property~3 does its work, and that it requires no ordering of the paths.

\emph{(iii)} Algorithm~\ref{alg:CSQPestimate} initializes $L$ with the links terminating direct probes, so after line~6 we have $L = L_{1}$, non-empty by~(i). Each pass of the repeat loop adds to $L$ every link $\tau(P)$ whose prefix is contained in the value of $L$ at the start of that pass; hence after the $t$-th pass $L \supseteq L_{t+1}$, and the loop exits only when no further link can be added, i.e., when $L$ satisfies $L = L \cup \Gamma_{\mathcal{P}}(L)$. Since $(L_{t})$ is non-decreasing and reaches $\mathbb{L}(\mathcal{P})$ after at most $|E|$ steps, the loop performs at most $|E|$ passes and terminates with $L = \mathbb{L}(\mathcal{P})$, which equals $E$ by~(ii). Every link receives its estimate in the pass at which it enters $L$, through the recursion of Lemma~\ref{lemma:learnability}, so consistency of each $\hat{w}_i$ follows from that Lemma. For an efficient implementation, associate with each path $P$ an unresolved-prefix counter initialized to $|P^{-}|$ and maintain a worklist of newly learned links. Whenever a link becomes learnable, the counters of the paths containing that link in their prefixes are decremented. A path becomes eligible when its counter reaches zero. Since each path-link incidence is processed at most once, the learning order can be determined in
$O\!\left(\sum_{P\in\mathcal{P}} |P|\right)$ time.
\end{proof}

    \begin{algorithm}[ht]
        \caption{CSQP estimation (fixed point)}
        \label{alg:CSQPestimate}
        \begin{algorithmic}[1]
            \Require Network $G=(V,E)$; CSQP path set $\mathcal{P}$; probe measurement dataset $N$.
            \Ensure Estimates $\hat{w}_i$ for every $e_i \in E$.

            \State $L \gets \emptyset$;\; $\mathcal{Q} \gets \mathcal{P}$;
            \ForAll{$P \in \mathcal{Q}$ with $P^{-} = \emptyset$} \Comment{direct probes, $P = (e_i)$}
                \State $\hat{w}_i \gets \sqrt{\big(4N_{00}^{P} - N^{P}\big)\big/\big(3N^{P}\big)}$;
                \State $L \gets L \cup \{e_i\}$;\; $\mathcal{Q} \gets \mathcal{Q} \setminus \{P\}$;
            \EndFor
            \Repeat
                \State $\textit{progress} \gets \textsc{false}$;\; $\mathcal{Q}' \gets \mathcal{Q}$;
                \ForAll{$P \in \mathcal{Q}'$ with $\tau(P) = e_i$}
                    \If{$P^{-} \subseteq L$} \Comment{prefix already resolved}
                        \State $\hat{w}_{P^{-}} \gets \prod_{e_l \in P^{-}} \hat{w}_{l}$;
                        \State $\hat{w}_i \gets \sqrt{\big(4N_{00}^{P} - N^{P}\big)\big/\big(3N^{P}\,\hat{w}_{P^{-}}^{2}\big)}$;
                        \State $L \gets L \cup \{e_i\}$;\; $\mathcal{Q} \gets \mathcal{Q} \setminus \{P\}$;
                        \State $\textit{progress} \gets \textsc{true}$;
                    \EndIf
                \EndFor
            \Until{$\textit{progress} = \textsc{false}$}
            \If{$L \neq E$}
                \State \Return \textsc{not identifiable}; \Comment{$\mathcal{P}$ is not a CSQP}
            \EndIf
            \State \Return $\{\hat{w}_i\}_{e_i \in E}$
        \end{algorithmic}
    \end{algorithm}

The learnability framework above can also be connected to a rank-based characterization of identifiability through the path-link incidence matrix. Let $\mathcal{P} = \{P^{1},\ldots,P^{|\mathcal{P}|}\}$ and define the path-link incidence matrix $A \in \{0,1\}^{|\mathcal{P}| \times |E|}$ by $A_{r,l} = 1$ if $e_l \in P^{r}$ and $A_{r,l} = 0$ otherwise. For $w \in (0,1)^{|E|}$ write $u_l = \ln w_l$ and $u = (u_1,\ldots,u_{|E|})^{\top}$.
\begin{proposition}[Log-linear form of cyclic probing]\label{prop:loglinear}
The measurement statistics of the probe associated with $P^{r}$ depend on $w$ only through $X_{P^{r}} = \prod_{e_l \in P^{r}} w_l^{2}$, and
\begin{align}
    \ln X_{P^{r}} = 2\,(Au)_{r}, \qquad r = 1,\ldots,|\mathcal{P}|. \label{eq:loglinear}
\end{align}
Consequently, $w$ is globally identifiable from $\mathcal{P}$ on $(0,1)^{|E|}$ if and only if $\operatorname{rank}(A)=|E|$.
\end{proposition}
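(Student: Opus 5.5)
The argument splits into two parts: the log-linear identity, and then the rank characterization, which reduces to injectivity of a composite map.

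\emph{The identity.} By \eqref{eq:probeState}, the probe associated with $P^{r}$ is the Werner state $\rho(X_{P^{r}})$ with $X_{P^{r}}=\prod_{e_l\in P^{r}} w_l^{2}$. Any measurement on this probe therefore has outcome statistics that depend on $w$ only through $X_{P^{r}}$. Taking logarithms, which is legitimate because $w\in(0,1)^{|E|}$ gives $X_{P^{r}}>0$, we get
\[
\ln X_{P^{r}}=2\sum_l A_{r,l}\ln w_l = 2(Au)_r .
\]

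\emph{Reduction to injectivity.} Identifiability means that the map from $w$ to the joint distribution of all probe outcomes is injective on $(0,1)^{|E|}$. This map factors as
\[
w \;\mapsto\; u=\ln w \;\mapsto\; Au \;\mapsto\; (X_{P^{r}})_r \;\mapsto\; \text{statistics}.
\]
The first map is a bijection from $(0,1)^{|E|}$ onto $(-\infty,0)^{|E|}$. The third is coordinatewise exponentiation (after halving), hence a bijection. For the last map, the Bell-basis outcome probability $(1+3X)/4$ is strictly monotone in $X$, as already used in Lemma~\ref{lemma:learnability}, and the probes are independent. So the whole family of statistics is in bijection with $(X_{P^{r}})_r$, and hence with $Au$. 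It follows that $w$ is identifiable if and only if $u\mapsto Au$ is injective on the negative orthant $(-\infty,0)^{|E|}$.

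\emph{Rank characterization.} If $\operatorname{rank}(A)=|E|$, then $A$ has trivial kernel and injectivity holds on all of $\mathbb{R}^{|E|}$, in particular on the orthant. For the converse, suppose $\operatorname{rank}(A)<|E|$, so there is a nonzero $d\in\ker A$. Take any interior point, say $u=-\mathbf{1}$, and set $u'=u+\epsilon d$ with $\epsilon>0$ small enough that $u'$ stays in $(-\infty,0)^{|E|}$. Then $Au'=Au$ but $u'\neq u$, so $w'=e^{u'}\neq w=e^{u}$ produce identical statistics, and $w$ is not identifiable. This also shows non-identifiability holds locally near every point, not merely at some point.

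The only delicate step is this converse. One must make sure the kernel perturbation stays inside the open parameter domain, which is why it matters that the domain is open, and one must state identifiability as injectivity over the whole domain, so that a single colliding pair suffices. The rest is routine composition of bijections.
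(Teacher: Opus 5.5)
Your proposal is correct and follows essentially the same route as the paper: the probe statistics depend on $w$ only through $X_{P^{r}}$, and injectivity of $X\mapsto(1+3X)/4$ recovers $X_{P^{r}}$ and hence $Au$. Full rank then gives the \emph{if} direction, and a kernel perturbation $u+\varepsilon z$ that stays inside the open orthant $(-\infty,0)^{|E|}$ gives the converse. Your factorization into a chain of injective maps is a tidier packaging of the same argument; one small slip is that going from $Au$ to $X_{P^{r}}$ means doubling and then exponentiating, not halving.
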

\begin{proof}
The probe state is $\rho(X_{P^{r}})$, which is Bell diagonal with spectrum $\{(1+3X_{P^{r}})/4,\,(1-X_{P^{r}})/4,\,(1-X_{P^{r}})/4,\,(1-X_{P^{r}})/4\}$. Every measurement statistic of that state is therefore a function of $X_{P^{r}}$ alone and $X \mapsto (1+3X)/4$ is injective on $(0,1)$, so the statistics of probe $r$ determine $X_{P^{r}}$ and nothing more. Equation~\eqref{eq:loglinear} follows on taking logarithms.
If $\operatorname{rank}(A) = |E|$ then \eqref{eq:loglinear} determines $u$, hence $w$, uniquely. Conversely, if $\operatorname{rank}(A) < |E|$ there exists $z \neq 0$ with $Az = 0$. For $\varepsilon$ small enough that $u + \varepsilon z$ remains in the open set $(-\infty,0)^{|E|}$, the parameter vectors $w$ and $w' = \exp(u + \varepsilon z)$ are distinct yet give $X_{P^{r}} = X'_{P^{r}}$ for every $r$, and therefore identical statistics for every probe of $\mathcal{P}$. Hence $w$ is not identifiable.
\end{proof}

Proposition~\ref{prop:loglinear} recasts cyclic-probing \ac{QNT} as a linear inverse problem in $\ln w$, in direct analogy with multiplicative classical network tomography~\cite{networktomography}, and reduces identifiability to a rank condition that is well defined for any graph.
\begin{proposition}\label{th:rank}
Let $G=(V,E)$ be an arbitrary network and $\mathcal{P}$ a \ac{CSQP} over $G$. Then $\operatorname{rank}(A) = |E|$. In particular, $\mathcal{P}$ identifies all link Werner parameters on $(0,1)^{|E|}$, and $|\mathcal{P}| \geq |E|$, with equality attainable.
\end{proposition}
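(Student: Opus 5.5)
The plan is to extract from the learning order a square submatrix of $A$ that is triangular with unit diagonal. By Theorem~\ref{th:CSQPIdentifiability}(ii), $\mathbb{L}(\mathcal{P}) = E$, so every link $e_i$ has a well-defined learning round $t(e_i)$. For each $e_i$, fix a witness path $P_i \in \mathcal{P}$ as in the proof of Lemma~\ref{lemma:learnability}: $\tau(P_i) = e_i$ and $P_i^{-} \subseteq L_{t(e_i)-1}$. Every link in $P_i^{-}$ therefore satisfies $t(e_l) < t(e_i)$.

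The witness paths $P_i$ are pairwise distinct, since their terminal links differ. Order the links so that $t$ is non-decreasing, breaking ties arbitrarily, and order the rows $P_i$ the same way. Call the resulting $|E|\times|E|$ submatrix $B$ of $A$. Row $i$ of $B$ has a $1$ in column $i$, because $e_i \in P_i$. Its other nonzero entries lie in columns $l$ with $e_l \in P_i^{-}$, and these satisfy $t(e_l) < t(e_i)$, so they precede $i$ in the ordering.

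The one point that needs care is the ties. Links with the same learning round never appear in each other's witness prefixes, because the inequality is strict. This also rules out $e_i \in P_i^{-}$: that would force $t(e_i) < t(e_i)$. The same observation shows the diagonal entry is exactly $1$; and in any case paths are sets of links, so $A_{r,l}\in\{0,1\}$. Hence $B$ is lower unitriangular, $\det B = 1$, and $\operatorname{rank}(A) \ge \operatorname{rank}(B) = |E|$. The upper bound $\operatorname{rank}(A) \le |E|$ is trivial.

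Identifiability on $(0,1)^{|E|}$ then follows directly from Proposition~\ref{prop:loglinear}. The bound $|\mathcal{P}| \ge |E|$ follows because the rank is at most the number of rows.

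For attainability of equality, take $G$ connected with at least one monitor $m$ (e.g.\ $M=V$, or any single monitor). Take a BFS spanning tree rooted at $m$, and let each tree edge be probed by the tree path from $m$ ending at it. For each non-tree edge $(u,v)$, use the tree path from $m$ to $u$ followed by $(u,v)$, which is a valid simple path after choosing the endpoint appropriately, e.g.\ the deeper endpoint is not on the root path of the other. Every prefix consists of tree edges, which are learnable by induction on depth. This gives a CSQP with exactly $|E|$ paths. A simpler example that suffices is a star or any graph with $M=V$ and all direct probes, which gives $|\mathcal{P}|=|E|$ immediately.
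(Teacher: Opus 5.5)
Your rank argument is the paper's argument. You pick one witness path per link, order rows and columns by nondecreasing learning round, and note that the resulting $|E|\times|E|$ submatrix of $A$ is lower unitriangular, because every prefix link has a strictly smaller round. Identifiability via Proposition~\ref{prop:loglinear} and the bound $|\mathcal{P}|\ge|E|$ then follow as you say.

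The only real divergence is the attainability clause. The paper handles it by pruning: the witness family $\mathcal{W}=\{P_{(i)}\}_{e_i\in E}$ extracted from the given $\mathcal{P}$ is itself a CSQP with exactly $|E|$ paths. The paper leaves the check implicit, but it is a one-line induction showing $L_t(\mathcal{P})\subseteq L_t(\mathcal{W})$, since each witness prefix lies in an earlier round. This route needs no connectivity assumption and no construction. It applies to the given $G$ and monitor set whenever any CSQP exists, and it shows that any CSQP can be stripped of redundant probes.

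Your BFS-tree construction instead proves existence from scratch. On any connected $G$ with at least one monitor it gives a CSQP of size $|E|$, and appending each non-tree edge from its shallower endpoint does make the path simple. This is a constructive result that also shows CSQPs exist under mild hypotheses. For disconnected $G$ you would have to run it per component; every component containing a link must then contain a monitor if a CSQP exists. Your $M=V$ all-direct example only shows attainability for particular instances, not for the given monitor set.
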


\begin{proof}
By Theorem~\ref{th:CSQPIdentifiability}(ii), $\mathbb{L}(\mathcal{P}) = E$, so every link $e_i$ has a finite learning round and admits a witness path $P_{(i)}$ terminating in $e_i$. Order the links by nondecreasing learning round and order the corresponding witness rows in the same way. Since every link in $P_{(i)}^{-}$ has a strictly smaller learning round than $e_i$, the resulting $|E|\times |E|$ submatrix of $A$ is lower triangular with unit diagonal. Hence $\operatorname{rank}(A)=|E|$. Identifiability follows from Proposition~\ref{prop:loglinear}, and $|\mathcal{P}| \geq |E|$. Equality holds when $\mathcal{P}$ consists exactly of the witness paths.
\end{proof}

\subsection*{Estimation efficiency and the QFIM}
We use the \ac{QFIM} and the \ac{QCRB} to quantify the achievable precision in estimating link parameters with cyclic probing and the \ac{CSQP} family. The QFIM of a parameter-dependent state $\sigma(w)$ quantifies the amount of information the state carries about the parameter vector $w$. Obtained from the inverse of the \ac{QFIM}, the \ac{QCRB} establishes a lower bound on the covariance matrix of any unbiased estimator $\hat{w}$ of $w$ that can be obtained from measurements of $\sigma(w)$~\cite{helstrom1969quantum, qfim}. Let $F_{Q}[\sigma(w)]$ denote the QFIM of the parameter-dependent state $\sigma(w)$ with respect to the link-parameter vector $w$. Since the QFIM is additive with respect to tensor products, the QFIM of a \ac{CSQP} is the entry-wise sum of the QFIMs for each individual probe state used in \ac{QNT}~\cite{starcharacterization}. Thus, the \ac{QFIM} $F_{Q}^{\mathcal{P}}$ of the \ac{CSQP} specified by $\mathcal{P}$ with measurement dataset $N$ is 
\begin{align}
    & F_{Q}^{\mathcal{P}}=\sum_{P_{ab} \in \mathcal{P}} N_{ab} F_{Q}[\rho(w_{ab}^{2})],
\end{align}
where $N_{ab}$ is the number of measurements of probes $P_{ab}$.

The \ac{QFIM} $F_Q[\rho(w_{i}^2)]$ of probe $\rho(w_{i}^2)$ representing the direct monitoring of link $e_i$ is a diagonal matrix with a single non-zero element given by
\begin{equation}\label{eq:directQFI}
    F_{ii}^{D} = \frac{12 w_i^2}{(1 + 3 w_i^2)(1 - w_i^2)} .
\end{equation}
For indirect monitoring of link $e_i = (v_j,v_k)$ through path $P_{ak} \in \mathcal{P}$, the entries of $F_{Q}[\rho(w_{ak}^{2})]$ are
\begin{equation}
\label{eq:indirectQFI}
    F_{ig}^{I} = \frac{12 w_i w_g \bigg(\prod_{\substack{{l \in P_{ak}}\\l \ne i}} w_l^2 \prod_{\substack{{l \in P_{ak}}\\l \ne g}} w_l^2\bigg)}{(1 + 3 \prod_{l \in P_{ak}} w_l^2)(1 - \prod_{l \in P_{ak}}w_l^2)}.
\end{equation}
Complete derivations, together with numerical evaluations for a four-node star network, are provided in Appendix~A.

Expressions~\eqref{eq:directQFI} and~\eqref{eq:indirectQFI} admit a compact matrix form that makes the connection with Proposition~\ref{th:rank} explicit. Since the probe state associated with $P \in \mathcal{P}$ depends on $w$ only through $X_{P} = \prod_{l \in P} w_l^{2}$, its \ac{QFIM} is a rank-one matrix,
\begin{align}
    F_{Q}[\rho(X_{P})] = c(X_{P})\,\nu_{P}\nu_{P}^{\top},
    \qquad
    c(X) = \frac{12X^{2}}{(1+3X)(1-X)}, \label{eq:rankone}
\end{align}
where $(\nu_{P})_{l} = 1/w_l$ if $e_l \in P$ and $0$ otherwise. Equation~\eqref{eq:indirectQFI} is recovered from \eqref{eq:rankone} on writing $\prod_{l \in P,\, l \neq i} w_l^{2} = X_{P}/w_i^{2}$, and \eqref{eq:directQFI} is the special case $P = \{e_i\}$. The rank-one structure is a direct consequence of the fact that a single cyclic probe carries information about one scalar function of $w$ only; identifiability of the full parameter vector is therefore never a property of an individual probe, but of the incidence structure of $\mathcal{P}$ as a whole.

\begin{corollary}[Non-singularity of the QFIM]\label{cor:qfim}
Let $N_{P}$ denote the number of measurements of probe $P$, $D_{w} = \operatorname{diag}(w_1^{-1},\ldots,w_{|E|}^{-1})$ and $C = \operatorname{diag}\big(N_{P}\,c(X_{P})\big)_{P \in \mathcal{P}}$. Then
\begin{align}
    F_{Q}^{\mathcal{P}} = D_{w}\,A^{\top} C\, A\, D_{w},
    \qquad
    \operatorname{rank}\big(F_{Q}^{\mathcal{P}}\big) = \operatorname{rank}(A).
\end{align}
Consequently, for any network $G$ and any \ac{CSQP} $\mathcal{P}$ over $G$ with $N_P \geq 1$, the matrix $F_Q^{\mathcal{P}}$ is positive definite and the QCRB $(F_Q^{\mathcal{P}})^{-1}$ is well defined. Moreover, since the probe states are Bell diagonal with parameter independent eigenvectors and are measured in the Bell basis, the corresponding QCRB is asymptotically attainable.
\end{corollary}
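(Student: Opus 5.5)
The plan is to build $F_Q^{\mathcal{P}}$ from the rank-one form \eqref{eq:rankone}, deduce its rank from that of $A$, and then invoke Proposition~\ref{th:rank}.

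\textbf{Matrix factorization.} Start from the additivity formula $F_Q^{\mathcal{P}}=\sum_{P\in\mathcal{P}} N_P\,F_Q[\rho(X_P)]$ and substitute \eqref{eq:rankone}. This gives $F_Q^{\mathcal{P}}=\sum_P N_P c(X_P)\,\nu_P\nu_P^{\top}$. Observe that $\nu_P = D_w a_P$, where $a_P^{\top}$ is the row of $A$ indexed by $P$: entry $l$ is $1/w_l$ exactly when $A_{P,l}=1$. Hence
\begin{align*}
\sum_{P\in\mathcal{P}} N_P c(X_P)\, D_w a_P a_P^{\top} D_w \;=\; D_w A^{\top} C A D_w,
\end{align*}
which is the stated identity.

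\textbf{Rank.} $D_w$ is invertible because $w\in(0,1)^{|E|}$. On $X\in(0,1)$ we have $c(X)>0$, so with $N_P\ge 1$ the matrix $C$ is positive definite and $C^{1/2}$ exists and is invertible. Then
\begin{align*}
F_Q^{\mathcal{P}} = (C^{1/2} A D_w)^{\top}(C^{1/2} A D_w),
\end{align*}
so $F_Q^{\mathcal{P}}$ is positive semidefinite and $\operatorname{rank}(F_Q^{\mathcal{P}})=\operatorname{rank}(C^{1/2}AD_w)=\operatorname{rank}(A)$, since multiplying by invertible matrices preserves rank. Only the rank-$N_P\ge 1$ measurements need $C$ strictly positive, and that is where this hypothesis is used. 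For a CSQP, Proposition~\ref{th:rank} gives $\operatorname{rank}(A)=|E|$. Thus $F_Q^{\mathcal{P}}$ is a full-rank positive semidefinite $|E|\times|E|$ matrix, hence positive definite, and the QCRB $(F_Q^{\mathcal{P}})^{-1}$ is well defined.

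\textbf{Attainability.} Each probe $\rho(X_P)$ is Bell diagonal with eigenvectors independent of $w$. The symmetric logarithmic derivatives are therefore diagonal in the Bell basis and commute for all parameters, so measuring in the Bell basis yields a classical Fisher information equal to the QFI of each probe. By additivity over independent probes, the total classical FIM equals $F_Q^{\mathcal{P}}$, as already noted in the Methods. Standard asymptotic efficiency of the MLE under a nonsingular FIM, with a regular multinomial model and interior parameter, then gives asymptotic attainability.

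\textbf{Main obstacle.} The delicate step is the last one. It needs the global MLE of $w$ from all probes to be asymptotically efficient even when $|\mathcal{P}|>|E|$, where the model is a curved sub-family of the product multinomial. I would handle this with standard regularity arguments: identifiability from Proposition~\ref{prop:loglinear}, smoothness, and a nonsingular FIM. I would not rely on the sequential estimator of Algorithm~\ref{alg:CSQPestimate}, which need not be efficient.
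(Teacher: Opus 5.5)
Your proposal is correct and follows the paper's proof essentially step for step. Like the paper, you write $\nu_P = D_w a_P$, sum the rank-one terms to obtain $D_w A^{\top} C A D_w$, and use invertibility of $D_w$ together with $C \succ 0$ (from $c(X)>0$ on $(0,1)$ and $N_P \geq 1$) to equate $\operatorname{rank}(F_Q^{\mathcal{P}})$ with $\operatorname{rank}(A)$; Proposition~\ref{th:rank} then gives positive definiteness. The only difference is the attainability clause: you argue it explicitly (commuting Bell-diagonal SLDs, the Bell measurement saturating the QFIM, asymptotic efficiency of the global MLE rather than the sequential estimator of Algorithm~\ref{alg:CSQPestimate}), whereas the paper's proof does not address it and relies on the earlier remark in the Methods section, so your treatment of that part is more complete.
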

\begin{proof}
Write $\nu_{P} = D_{w}a_{P}$, where $a_{P}^{\top}$ is the row of $A$ associated with $P$. Summing \eqref{eq:rankone} over $\mathcal{P}$ with multiplicities $N_{P}$ gives
\begin{align}
    F_{Q}^{\mathcal{P}} = D_{w}\Big(\textstyle\sum_{P \in \mathcal{P}} N_{P}\,c(X_{P})\,a_{P}a_{P}^{\top}\Big)D_{w} = D_{w}A^{\top}CAD_{w}.
\end{align}
For $w \in (0,1)^{|E|}$ the matrix $D_{w}$ is non-singular, and $c(X) > 0$ for $X \in (0,1)$ makes $C$ positive definite, so $A^{\top}CA = (C^{1/2}A)^{\top}(C^{1/2}A)$ has the same rank as $A$. Proposition~\ref{th:rank} gives $\operatorname{rank}(A) = |E|$ for any \ac{CSQP} on any topology, hence $F_{Q}^{\mathcal{P}} \succ 0$.
\end{proof}

Positive definiteness guarantees identifiability, but estimation precision also depends on the conditioning of $F_Q^{\mathcal{P}}$. In particular, longer probing paths and lower-quality links reduce $X_P=\prod_{e_l\in P}w_l^2$ and consequently the QFIM contribution $c(X_P)$. Such paths can therefore reduce the smallest eigenvalue of the total QFIM and increase $\operatorname{Tr}((F_Q^{\mathcal{P}})^{-1})$.
Corollary~\ref{cor:qfim} shows that the identifiability condition of Proposition~\ref{th:rank} and the nonsingularity of the \ac{QFIM} are one and the same condition. Every feasible solution of the optimization formulations introduced below therefore admits a finite \ac{QCRB}, on any topology, with no additional topology-specific requirement.

\subsection*{Maximum Likelihood Estimators for Star Network}
The \ac{MLE} of the Werner parameter vector maximizes the likelihood of the observed measurement outcomes~\cite{fiuravsek2001maximum,qfim} and is asymptotically efficient: its covariance matrix approaches the \ac{QCRB} limit in the large-sample regime. We now describe closed form expressions for the \ac{MLE} of Werner parameters in an $n$-node star network, using the same path notation as in the \ac{CSQP} construction. 

For a directly monitored link $e_q = (v_a, v_h)$, that is not part of any indirect monitoring probe, where $v_a \in M$ is a monitor and $v_h$ is the center of the star, the MLE is given by \eqref{eq:e2eWernerMLE}. Indirect probes in a star are two-hop paths of the form $P_{aj}=[v_a, v_h, v_j]$, with $v_a \in M$ and end-to-end Werner parameter $w_{aj}=w_qw_j$. For an indirectly monitored link $e_j=(v_h,v_j)$, let
$\mathcal{I}_q=\{j: e_j \text{ is indirectly monitored through a path containing } e_q\}$.
The Werner parameter MLE $\hat{w}_j$ is obtained from the MLE
$\hat{w}_q$ of the parameter associated with the directly monitored link $e_q$,
and is given by
\begin{equation}
\label{eq:indirect_star_mle}
    \hat{w}_j
    =
    \frac{\hat{w}_{aj}}{\hat{w}_q}
    =
    \sqrt{
    \frac{4N_{00}^{aj}-N^{aj}}
    {3N^{aj}\hat{w}_q^2}
    },
    \qquad j\in\mathcal{I}_q .
\end{equation}

In the case of a directly monitored link $e_q$ that is part of indirect probes, the MLE $\hat{w}_q$ is obtained from the joint likelihood of all probe measurements that utilize $e_q$. For an interior feasible joint MLE satisfying $0 < \hat{x}_{aj} < \hat{w}_q^2$ for every $j\in\mathcal{I}_q$, the indirect likelihood contributions vanish, and the estimate of the common link reduces to
\begin{equation}
\label{eq:neweqmle}
    \hat{w}_q
    = \sqrt{\frac{4N_{00}^{ah} - N^{ah}}{3N^{ah}}}.
\end{equation}
The detailed derivations of the \ac{MLE} equations with the corresponding \ac{MSE} convergence results are presented in Appendix~B.

The above closed-form expressions are specific to the star-network setting, where each indirect probe contains one directly monitored link and one indirectly monitored link. Although the likelihood equations are generally coupled through the products of link parameters appearing in indirect probe paths, for an interior feasible joint solution this coupling collapses in the considered star network setting, as shown in Appendix B. For general multi-link paths, however, such cancellation need not occur, and the likelihood equations may form a coupled nonlinear system involving several link parameters. Deriving compact closed-form MLEs in that setting is therefore considerably more difficult. Hence, for arbitrary network topologies, we use the CSQP estimator described in Algorithm~\ref{alg:CSQPestimate}.
\vspace{0.5em}

\section*{Results}

\subsection*{The Optimal Monitor Placement Problem}
The optimal monitor placement problem consists of placing monitors in the network nodes and specifying QNT probes such that: (i) error parameters for all network links can be estimated and (ii) the solution obtained optimizes a desired measure of estimation performance.
We study the optimal monitor placement problem through an \ac{ILP} framework that allows exploring principles of optimal experimental design together with monitoring load balancing, which is of practical interest.

\subsubsection*{Experimental Design Principles}
Common approaches in experimental design include specifying probes that minimize the average variance (A-optimality), maximize the QFIM determinant (D-optimality), maximize the QFIM trace (T-optimality), and maximize the minimum QFIM eigenvalue (E-optimality)~\cite{3,rady2009relationships}, each yielding distinct QNT solutions. From a statistical standpoint, there are strong reasons to prefer objectives that focus on the inverse or the determinant of the QFIM. In particular, A-optimality, i.e., minimizing the trace of the inverse QFIM,
directly targets the aggregate estimation error through the QCRB, since $\mathrm{Tr}(\mathrm{F}_Q^{-1})$ lower bounds the sum of the variances of any locally unbiased estimator obtained from probe measurements. Likewise, D-optimality encourages designs that are informative in all parameter directions by penalizing poorly conditioned matrices. In both cases, these objectives provide robust, theoretically grounded measures of statistical performance. However, these are difficult to optimize analytically and numerically. Both matrix inversion and determinant computations introduce nonlinear and nonconvex expressions in the decision variables, which makes the resulting problem extremely hard to solve when monitor placement decisions are binary. Incorporating these operations into an optimization formulation either renders the model intractable or requires relaxations that undermine the exactness of the solution.

In this work, we maintain scalability by adopting the T-optimal criterion \cite{atkinson2007optimum}, which focuses on maximizing the trace of the QFIM, i.e., the sum of the QFI for all link parameters. The trace grows with the total amount of information contributed by all measurements and, importantly, depends linearly on the variables that encode the monitor placement problem. This allows formulating the monitor placement problem as an ILP since nonlinear constraints are absent. Although T-optimality is a simpler metric than A- or D-optimality, it is widely used as a practical surrogate because it tends to improve estimation precision while keeping the optimization problem manageable \cite{VICKERS2021649, 6194279, ruppert2012optimal}. Nonetheless, T-optimality does not directly quantify estimation error. Moreover, the QFIM contribution $c(X)=12X^{2}/[(1+3X)(1-X)]$ increases without bound as $X\rightarrow1^{-}$. Consequently, the T-optimal objective can be dominated by high-quality, short probing paths. This differs from the QCRB metric $\operatorname{Tr}(F_Q^{-1})$, which is strongly affected by poorly estimated parameter directions. Therefore, maximizing $\operatorname{Tr}(F_Q)$ does not necessarily minimize $\operatorname{Tr}(F_Q^{-1})$. Thus, after solving the ILP using this surrogate objective, we evaluate the resulting designs using the QCRB, which provides a more meaningful measure of practical estimation performance.

The QFIM coefficients depend on the link Werner parameters, which are assumed to be available from prior calibration. Developing an adaptive QNT framework that jointly updates parameter estimates and monitor placement is left for future work.

\begin{table}[h]
\caption{List of Variables}\label{tab1}
\begin{tabularx}{\columnwidth}{@{}lX@{}}
\toprule
Variables & Description \\
\midrule
$V$    & Set of network nodes (vertices) \\
$E$    & Set of network links (edges)   \\
$M$    & Set of monitor nodes   \\
$\mathcal{J}$ & Set of monitor indices \\
$F_{ii}^{D}$    & Diagonal QFIM entry associated with link $e_i$  \\
$F_{gg}^{I}$    & Diagonal QFIM entry associated with link $e_g$  \\
$z_{i,j,k}$    & Link $i$ is indirectly measured by monitor $j$ placed at node k \\
$x_{i,j}$    & Link $i$ is directly measured by monitor $j$ \\
$p_{i,j}$    & Link $i$ is indirectly measured by monitor $j$   \\
$m_{k,j}$    & Monitor $j$ is placed at node $k$  \\
$\lambda_{i}$    & Learning level of link $i$; enforces a well-founded resolution order \\
$P_{k,i}$    &Candidate probing path from node $k$ terminating in link $i$ \\
$\ell_j$    & Number of links measured by monitor $j$   \\
${L_j}^*$    & Monitoring capacity of monitor $j$   \\
\bottomrule
\end{tabularx}
\end{table}

\subsubsection*{QMF and QF ILP Formulations}
We propose two ILP formulations for the optimal monitor placement problem: one that maximizes the trace of the QFIM (QF); and another that obtains the maximum achievable QFIM trace under constraints on the number of links monitored by each network monitor (QMF). The first solution prioritizes estimation information, while the second solution balances estimation information with monitoring overhead and parallel probe generation, i.e., the simultaneous generation of probe states through multiple paths. The variables used in the ILP formulations are listed in Table~\ref{tab1}.

We present the QMF \ac{ILP} formulation and obtain the QF formulation by removing the monitoring capacity constraints from QMF. Both formulations solve the monitor placement problem under the assumption that paths between network nodes are specified \textit{a priori} by a routing algorithm. Specifically, for a given network $G=(V,E)$, for every candidate monitor location $k\in V$ and every target link $e_i=(u,v)\in E$, we construct a
candidate probing path $P_{k,i}$ by selecting a hop-shortest path from $k$ to either endpoint of $e_i$ without traversing $e_i$, and then appending $e_i$ as the final link. If multiple paths have the same minimum length, ties are resolved deterministically. If $k$ is an endpoint of $e_i$, the prefix is empty and $P_{k,i}=\{e_i\}$. We denote the set of all candidate paths $P_{k,i}$ by $\mathcal{P}^{\star}$.

The goal of QMF is to maximize the QFIM-trace subject to per-monitor capacity constraints. To capture this trade-off, the objective function maximizes the trace of the QFIM subject to monitor-specific capacity constraints. The capacity of a monitor corresponds to the maximum number of links the monitor can measure. Let $\mathcal{J}=\{1,\ldots,m\}$ denote the index set of the $m$ available monitors.

Direct measurements of a link can only occur when a monitor is located at one end node of that link. Indirect measurements require a probe path that starts in a monitor node and traverses the target link, e.g., in a star topology, this corresponds to a two-hop path through the hub.
The decision variable $x_{i,j} \in \{0, 1\}$ denotes whether monitor $j$ directly measures link $i$, while $p_{i,j} \in \{0, 1\}$ indicates indirect measurement of link $i$ by monitor $j$. Also, the placement of the monitor $j$ on node $k$ is represented by the decision variable $m_{k,j}$.

To compute the trace of the final QFIM, we sum the diagonal elements of the QFIM corresponding to direct \eqref{eq:directQFI} and indirect \eqref{eq:indirectQFI} monitoring. Each term in the sum is weighted by the corresponding decision variables to reflect the actual monitor placement and measurements. We assume that every selected probe is measured the same number of times. Since this common measurement count scales all QFIM contributions equally, it does not affect the optimal placement and is therefore omitted from the objective function. Here, optimality is defined with respect to the precomputed candidate path set $\mathcal{P}^{\star}$. The measurement overhead of a monitor is the total number of links that are either directly or indirectly measured by the monitor, and the monitoring-overhead is the maximum measurement overhead among all deployed monitors. 

The QMF optimization formulation is defined as follows:
\begin{align}
\label{obj}
    \max_{\substack{x_{i,j},\,p_{i,j},\,m_{k,j},\,z_{i,j,k}}}
\Bigg[
&\sum_{\substack{i\in E\\ j\in \mathcal{J}}}
x_{i,j}F_{ii}^{(D)} +
\sum_{\substack{i\in E\\ j\in \mathcal{J}\\ k\in V}}
z_{i,j,k}
\sum_{g\in P_{k,i}}
F_{gg}^{(I)}(P_{k,i})
\Bigg].
\end{align}
\text{s.t.}
\begin{align}
    & \sum_{j \in \mathcal{J}} (x_{i,j} + p_{i,j}) = 1, \quad \forall i \in E \label{dir-ind}; \quad \\
    & x_{i,j} \leq m_{u, j} + m_{v, j}, 
    \quad \forall i = (u, v) \in E, \forall j \in \mathcal{J} \label{dirr}; \quad \\
    & p_{i,j} + m_{k,j} \leq 1, \qquad \forall i = (u, v) \in E,\; \forall j \in \mathcal{J},\; \forall k \in \{u, v\} \label{indirr}; \quad \\
    & \sum_{k \in V} m_{k,j} = 1, \quad \forall j \in \mathcal{J} \label{mk1}; \quad \\
    & \sum_{j \in \mathcal{J}} m_{k,j} \leq 1, \quad \forall k \in V \label{mk2}; \quad \\
    & z_{i,j,k} \leq p_{i,j} , \quad z_{i,j,k} \leq m_{k,j}, \quad z_{i,j,k} \geq p_{i,j} + m_{k,j} - 1,\nonumber \\
    &\hspace{1.4cm} \forall i\in E,\ \forall j\in\mathcal{J},\ \forall k\in V; \label{zijk}\\
    &\lambda_{h} + 1 \;\leq\; \lambda_{i} + |E|\big(2 - p_{i,j} - m_{k,j}\big), \nonumber \\
    &\hspace{1.4cm} \forall i \in E,\; \forall j \in \mathcal{J},\; \forall k \in V,\; \forall h \in P_{k,i}\setminus\{i\};\label{lambh}\\
    &1 \leq \lambda_{i} \leq |E|, \quad \forall i \in E;\label{lambh2}\\
    & \ell_j = \sum_{i \in E} (x_{i,j} + p_{i,j}) \quad \forall j \in \mathcal{J}; \label{over1} \\
    & \ell_j \; \in \; \{1, \ldots, L_{j}^{\star}\}, \quad \forall\, j \in \mathcal{J}; \label{over2}\\
    & z_{i,j,k} \in \{0,1\}, \quad \forall i \in E, \, j \in \mathcal{J}, \, k \in V; \label{zijk2} \\
    & x_{i,j} \in \{0,1\}, \quad \forall i \in E, \, j \in \mathcal{J}; \label{dirr2} \\
    & p_{i,j} \in \{0,1\}, \quad \forall i \in E, j \in \mathcal{J}; \label{indirr2} \\
    & m_{k,j} \in \{0,1\}, \quad \forall k \in V, j \in \mathcal{J}; \label{mk3} \\
    &\lambda_{i} \in \{1,\ldots, |E|\}, \quad \forall i \in E \label{lambh3};
\end{align}

Constraint \eqref{dir-ind} ensures that each link is measured exactly once, either directly or indirectly by a monitor. Constraint \eqref{dirr} allows for direct measurement of a link only if a monitor is placed at one of the endpoints of that link. Constraint \eqref{indirr} ensures that a link cannot be assigned to indirect monitoring by a monitor located at either of its endpoint nodes. Following Definition \ref{def:learnable}, a link is learnable if its Werner parameter can be estimated from the probe states used for tomography, which by Definition~\ref{def:learnable} requires a well-founded order in which link parameters are resolved. Constraint \eqref{mk1} assigns each monitor to one node, while constraint \eqref{mk2} limits each node to have at most one monitor. Constraint \eqref{zijk} ensures that the auxiliary binary variable $z_{i,j,k}$ is equivalent to the product $p_{i,j}m_{k,j}$, thereby preserving the linearity of the formulation. Let $\lambda_i \in \{1, \cdots, |E|\}$ denote the learning level of link $e_i$. Constraints~\eqref{lambh} and~\eqref{lambh2} encode that order directly through a level, or potential, variable $\lambda_i$ attached to each link. Constraint~\eqref{lambh} is a big-$M$ inequality that becomes active exactly when monitor $j$ is placed at node $k$ and measures link $i$ indirectly, in which case it forces every link $h$ on the prefix of the probing path $P_{k,i}$ to carry a strictly smaller level than $i$; when $p_{i,j}=0$ or $m_{k,j}=0$ the right-hand side exceeds $|E|+1$ and the inequality is slack, so exactly one value of $k$ is binding for each monitor by Constraint~\eqref{mk1}. Constraint~\eqref{lambh2} bounds the levels, which is without loss of generality because Theorem~\ref{th:CSQPIdentifiability} shows that at most $|E|$ learning rounds are ever needed. A link that is assigned the smallest level cannot be measured indirectly, since that would require a link of strictly smaller level; it must therefore be measured directly, which seeds the resolution order. Theorem~\ref{th:ilpTopology} below shows that this is exactly what is needed, on any topology.

Constraints \eqref{lambh} and \eqref{lambh2} require only that the links of the prefix be resolved \emph{before} link $i$, and not that they be resolved by the same monitor. They therefore allow a link $i \in E$ to be indirectly monitored by monitor $j$ even if the links in the path between the node containing $j$ and $i$ are indirectly monitored by monitors different than $j$. If this behavior is not desired, the constraint

\begin{equation}
\begin{aligned}
p_{i,j} \le \frac{1}{|P_{k,i}|} \sum_{h \in P_{k,i}} (x_{h,j}+p_{h,j}) + (1 - m_{k,j}), \forall i \in E,\; \forall j \in \mathcal{J},\; \forall k \in V
\end{aligned}
\label{new}
\end{equation}

replaces Constraints \eqref{lambh} and \eqref{lambh2}. Constraint \eqref{new} ensures that link $i$ can be indirectly monitored by $j$ only if all links along the path connecting $j$ to $i$ are measured either directly or indirectly by $j$. Constraint \eqref{new} is stronger than \eqref{lambh}--\eqref{lambh2}: it requires all links in the prefix of an indirectly monitored path to be assigned to the same monitor. Under the candidate-path construction used here, this induces a valid level ordering satisfying Constraints~\eqref{lambh}--\eqref{lambh2}, so the results below apply to both variants.

In addition to the constraints above, we optionally impose a direct-monitoring enforcement constraint,

\begin{equation}
    2\sum_{j\in \mathcal{J}}x_{i,j} \geq \sum_{j\in \mathcal{J}}\left(m_{u, j} + m_{v, j}\right),\quad \forall i = (u, v) \in E; \label{9}
\end{equation}

This constraint requires a link to be directly monitored whenever a monitor is placed at either of its endpoint nodes. It is an optional design restriction on the measurement assignment and it is not required for CSQP learnability or identifiability. We later compare solutions obtained with and without this optional constraint.

We now establish that the constraint set requires no topological information about $G$ and, in particular, that no additional or modified constraints are needed when moving from stars and trees to arbitrary graphs.

\begin{theorem}[The constraint set is topology independent]\label{th:ilpTopology}
Let $G = (V,E)$ be an arbitrary network, let $\mathcal{P}^{\star}$ be the candidate path set constructed above, and let $(z, x,p,m,\ell,\lambda)$ be any feasible solution of \eqref{dir-ind}--\eqref{lambh3}. Let $k_{j}$ denote the node hosting monitor $j$ and let
\begin{align}
    \mathcal{P} = \big\{\, P_{k_{j},i} \;:\; j \in \mathcal{J},\; i \in E,\; x_{i,j} + p_{i,j} = 1 \,\big\}
\end{align}
be the induced path set. Then $\mathcal{P}$ is a \ac{CSQP} over $G$, and consequently $\mathbb{L}(\mathcal{P}) = E$, $\operatorname{rank}(A) = |E|$ and $F_{Q}^{\mathcal{P}} \succ 0$. 
\end{theorem}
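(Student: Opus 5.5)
The plan is to check the three \ac{CSQP} properties directly for the induced path set $\mathcal{P}$. Once they hold, the remaining conclusions follow from earlier results: Theorem~\ref{th:CSQPIdentifiability}(ii) gives $\mathbb{L}(\mathcal{P})=E$, Proposition~\ref{th:rank} gives $\operatorname{rank}(A)=|E|$, and Corollary~\ref{cor:qfim} gives $F_Q^{\mathcal{P}}\succ 0$. By Constraint~\eqref{mk1}, each monitor $j$ sits at a unique node $k_j$, so $k_j$ is well defined. Constraint~\eqref{zijk} forces $z_{i,j,k}=p_{i,j}m_{k,j}$, so $z$ plays no role beyond bookkeeping.

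\emph{Property~1.} Every $P_{k_j,i}$ starts at $k_j$ by the construction of $\mathcal{P}^{\star}$, and $k_j$ hosts a monitor. There is one case to treat: $x_{i,j}=1$. Constraint~\eqref{dirr} places monitor $j$ at an endpoint of $e_i$, so $k_j\in e_i$ and, by construction, $P_{k_j,i}=\{e_i\}$ is a genuine direct probe.

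\emph{Property~2.} Constraint~\eqref{dir-ind} gives, for each $e_i$, exactly one $j$ with $x_{i,j}+p_{i,j}=1$. Hence $P_{k_j,i}\in\mathcal{P}$ and $\tau(P_{k_j,i})=e_i$. So every link terminates some path in $\mathcal{P}$, and in fact exactly one, which makes $|\mathcal{P}|=|E|$.

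\emph{Property~3 (learnable prefixes).} This is the main step, and I would prove it by strong induction on the level $\lambda_i$. The claim is that $e_i\in L_{\lambda_i}$ for every $i$; since $\lambda_i\le|E|$ by~\eqref{lambh2}, this gives $\mathbb{L}(\mathcal{P})=E$ and therefore every prefix is learnable. Let $P=P_{k_j,i}$ be the unique path of $\mathcal{P}$ terminating in $e_i$. There are two cases.
\begin{itemize}
\item[] \emph{Case $x_{i,j}=1$.} Then $P^{-}=\emptyset$, so $e_i\in L_1\subseteq L_{\lambda_i}$.
\item[] \emph{Case $p_{i,j}=1$.} Then $m_{k_j,j}=1$, and Constraint~\eqref{indirr} gives $k_j\notin e_i$, so $P^{-}\neq\emptyset$. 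The big-$M$ term in Constraint~\eqref{lambh} vanishes, which yields $\lambda_h\le\lambda_i-1$ for every $h\in P^{-}$.
\end{itemize}
In the second case, the induction hypothesis gives $h\in L_{\lambda_h}\subseteq L_{\lambda_i-1}$ for each $h\in P^{-}$. Then $P^{-}\subseteq L_{\lambda_i-1}$, so $e_i\in\Gamma_{\mathcal{P}}(L_{\lambda_i-1})\subseteq L_{\lambda_i}$. The base case $\lambda_i=1$ is automatically of the direct type: an indirect link would need a prefix link $h$ with $\lambda_h\le 0$, which~\eqref{lambh2} forbids.

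The subtle point, and where I expect the obstacle, is making sure the levels genuinely well-found the order. Two things need care. First, the links $h$ in the prefix may be learned by other monitors, so the induction must run over the global level rather than per monitor. Second, the prefix of the candidate path never contains $e_i$ itself, which holds because the hop-shortest prefix is chosen to avoid $e_i$. This second point rules out a self-referential constraint $\lambda_i+1\le\lambda_i$ that would otherwise make the order ill-founded.

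For the variant where Constraint~\eqref{new} replaces~\eqref{lambh}--\eqref{lambh2}, I would note that the paper asserts it induces a valid level ordering. To justify this, I would build levels by hop distance of $\tau(P)$ from $k_j$ along the shortest-path prefixes, and then reapply the same induction. Finally, no step above uses any property of $G$ beyond the existence of the candidate paths, which is what makes the constraint set topology independent.
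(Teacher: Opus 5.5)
Your proposal is correct and follows essentially the paper's proof. You check the three CSQP properties directly, prove Property~3 by strong induction on the level $\lambda_i$ using the active big-$M$ form of Constraint~\eqref{lambh}, and then invoke Theorem~\ref{th:CSQPIdentifiability}, Proposition~\ref{th:rank} and Corollary~\ref{cor:qfim}; your sharper invariant $e_i\in L_{\lambda_i}$ and your explicit use of Constraint~\eqref{indirr} and of $e_i\notin P^{-}$ are minor refinements of the paper's argument, which proves $e_i\in\mathbb{L}(\mathcal{P})$ via the fixed-point property.
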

\begin{proof}
We verify the three \ac{CSQP} properties.

\emph{Property 1 (paths start at monitors)}: Every $P_{k,i} \in \mathcal{P}^{\star}$ starts at a candidate location $k$ by construction. Constraint~\eqref{dirr} activates $x_{i,j}$ only when a monitor is placed at an endpoint of $i$, and Constraint~\eqref{lambh} is binding only for the $k$ with $m_{k,j}=1$, which is unique by Constraint~\eqref{mk1}. Every path of $\mathcal{P}$ therefore starts at a monitor node.

\emph{Property 2 (every link terminates some path)}: Constraint~\eqref{dir-ind} assigns each link $i \in E$ to exactly one monitor, either directly or indirectly. Hence $\mathcal{P}$ contains exactly one path terminating in $i$ for every $i \in E$, namely $P_{k_{j},i}$, reducing to the single-link path when the assignment is direct. In particular, $|\mathcal{P}| = |E|$ and
$\tau:\mathcal{P}\rightarrow E$, which maps each path $P\in\mathcal{P}$ to its terminal link $\tau(P)\in E$, is a bijection.

\emph{Property 3 (learnable prefixes)}: We show by strong induction on $\lambda_i$ that $e_i \in \mathbb{L}(\mathcal{P})$ for every $i \in E$. Let $i \in E$ and assume the claim for every link of strictly smaller level. If $i$ is measured directly then, by Property 2, the path of $\mathcal{P}$ terminating in $i$ is the single link $i$, so $P^{-} = \emptyset$ and $e_i \in \Gamma_{\mathcal{P}}(\emptyset) \subseteq \mathbb{L}(\mathcal{P})$. If $i$ is measured indirectly by monitor $j$, then $p_{i,j} = 1$ and $m_{k_{j},j} = 1$, so Constraint~\eqref{lambh} yields $\lambda_h + 1 \leq \lambda_i$, i.e. $\lambda_h < \lambda_i$, for every $e_h \in P_{k_{j},i}^{-}$. By the induction hypothesis $e_h \in \mathbb{L}(\mathcal{P})$ for all such $h$, whence

\begin{align}
    e_i = \tau\big(P_{k_{j},i}\big) \in \Gamma_{\mathcal{P}}\big(\mathbb{L}(\mathcal{P})\big) \subseteq \mathbb{L}(\mathcal{P}).
\end{align}

The induction is well founded: $\lambda$ takes finitely many values on $E$, and a link attaining the minimum of $\lambda$ cannot be measured indirectly, since that would require a link of strictly smaller level. Hence $\mathbb{L}(\mathcal{P}) = E$ and, a fortiori, $P^{-} \subseteq \mathbb{L}(\mathcal{P})$ for every $P \in \mathcal{P}$, which is Property 3. Theorem~\ref{th:CSQPIdentifiability}, Proposition~\ref{th:rank} and Corollary~\ref{cor:qfim} then apply.

\emph{Topology independence}: Constraints~\eqref{dir-ind}--\eqref{lambh3} are expressed solely in terms of $E$, $V$, $\mathcal{J}$, the endpoint map $i \mapsto \{u, v\}$, and the path sets $P_{k,i}$. All of these objects are defined for an arbitrary graph, and no constraint refers to degree, girth, acyclicity, planarity or any other topological invariant.
\end{proof}

Finally, Constraint \eqref{over1} defines the measurement overhead $\ell_j$ for each monitor $j$ as the total number of links assigned to it for both direct and indirect measurement, and Constraint \eqref{over2} upper-bounds the measurement overhead of monitor $j$ by its monitoring capacity $L_j^{\star}$. This formulation allows monitors to have different capacities, which is of practical interest. In the remainder of this work, we assume all monitors have the same capacity, i.e., $L_{j}^{\star} =  L^{*}$. From the definition of monitoring overhead, a necessary condition for
feasibility is $\lceil |E| / m \rceil \leq L^{\star} \leq |E|.$ The lower bound follows from assigning the $|E|$ links among $m$ monitors, but satisfying this bound alone does not guarantee feasibility because of the remaining placement and learnability constraints. This formulation supports parallel probe generation when $L^{\star} < |E|$ and distributes the measurement overhead across monitors.

Assuming that $\mathcal{P}^{\star}$ is previously computed significantly simplifies the ILP formulation since, once monitors are fixed, the paths between them and all network links are also fixed. The QMF formulation specifies a CSQP defining set of probes $\mathcal{P} \subset \mathcal{P}^{\star}$, which is constructed as follows. For every monitor $j \in \mathcal{J}$, let $E_{j}^{D} = \{i: x_{i, j}=1, i \in E\}$ and $E_{j}^{I} = \{i: p_{i,j} = 1, i \in E \}$ denote the set of links directly and indirectly monitored by monitor $j$, respectively. Let $k_j \in V$ denote the node that received monitor $j$, i.e., $k_j$ is the unique node in $V$ for which $m_{k_j,j} =1$. $\mathcal{P}$ is formed by all paths $P_{k_j, i}$ for every monitor $j$, and for every link $i \in E_{j}^D \cup E_{j}^{I}$.

The objective of the QF optimization problem is to maximize QFIM-trace and is obtained by removing constraints \eqref{over1} and \eqref{over2} from the QMF formulation. In this case, the optimization problem focuses exclusively on the T-optimality criterion from experimental design, which often leads to monitor overloading scenarios where there is an imbalance in the number of links handled by each monitor.

\subsection*{Numerical Analysis of Optimal Monitor Placement}
\begin{figure*}[t]
\centering
\subfloat[\label{fig11a}]{\includegraphics[scale=0.5]{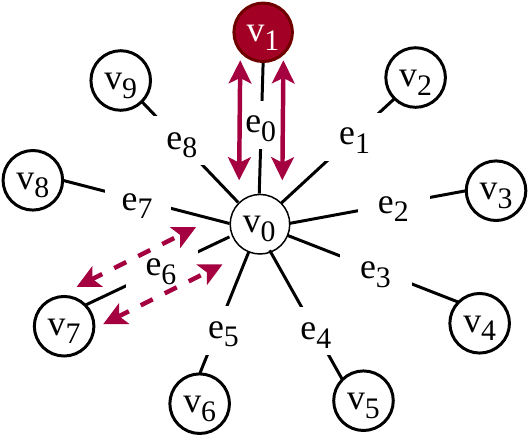}}%
\hfil
\subfloat[\label{fig11b}]{\includegraphics[scale=0.5]{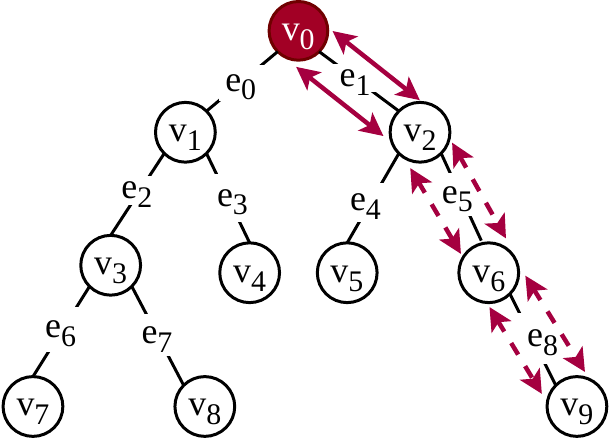}}%
\caption{Network topologies considered in the numerical experiments: (a) $10$-node star network, and (b) $10$-node tree network. A single-monitor deployment and its possible associated measurements along a path are shown as an example.
}
\label{fig11}
\end{figure*}
We study the applicability of our optimization formulations by analyzing ten-node star and tree networks shown in Fig. \ref{fig11} with inhomogeneous noise depicted in Figs. \ref{fig_sim_results1}, \ref{fig_sim_results1a}, and \ref{fig_sim_results4}, \ref{fig_sim_results4a}, respectively. We solve both the QMF and QF formulations varying the number of monitors from one to nine in the case of the star, and from one to four in the case of the tree. The maximum number of monitors in both cases allows for full direct monitoring coverage, i.e., every link can be directly monitored. Moreover, we solve the QMF for both the star and the tree network assuming that the monitoring capacity is $\lceil |E| / m \rceil $ for each monitor when $m$ monitors are used. We solve both QF and QMF with and without direct monitoring enforcement constraint (\ref{9}) to evaluate its effect on estimation performance and monitor overhead.

\subsubsection*{Star Networks}
\begin{figure*}[t]
\centering
\subfloat[\label{fig2a}]{\includegraphics[scale=0.25]{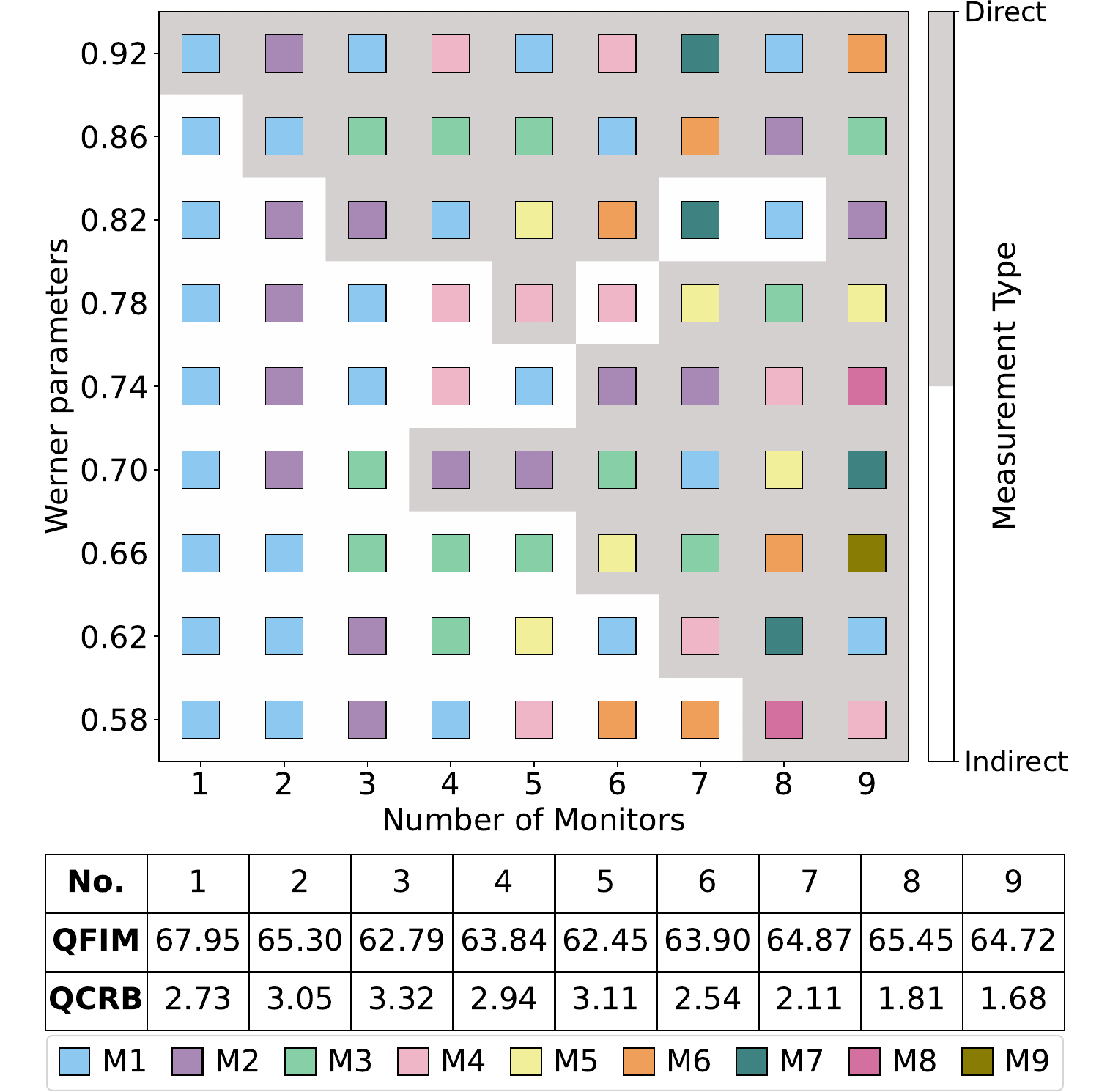}}%
\hfil
\subfloat[\label{fig2b}]{\includegraphics[scale=0.25]{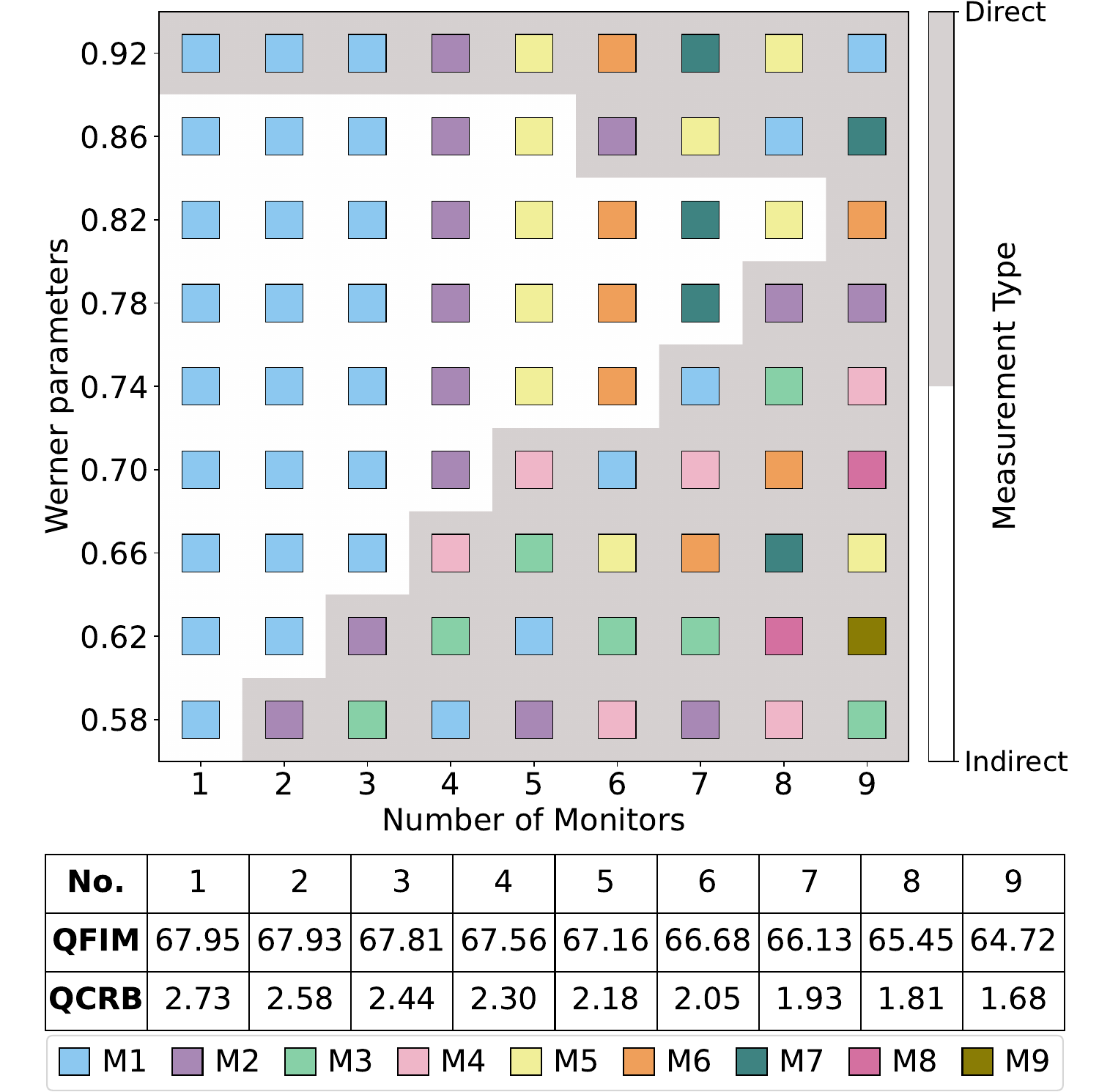}}%
\caption{Monitor placement results for a ten-node star network with inhomogeneous Werner parameters under the direct monitoring enforcement constraint~\eqref{9}, with up to nine available monitors. The x-axis denotes the number of deployed monitors, ranging from one to nine, while each row corresponds to a network link ordered in decreasing Werner parameter value. Indirect measurements are shown in white, and direct measurements are shown in light gray. The color inside each cell identifies the monitor assigned to the corresponding measurement. Subfigures show the results for (a) QMF and (b) QF. The corresponding QFIM trace and QCRB values for each number of deployed monitors are reported in the table below each subfigure. The variation in monitor index assignment is due to symmetry in the model, as all monitors are identical and interchangeable.}
\label{fig_sim_results1}
\end{figure*}
\begin{figure*}[t]
\centering
\subfloat[\label{fig2c}]{\includegraphics[scale=0.25]{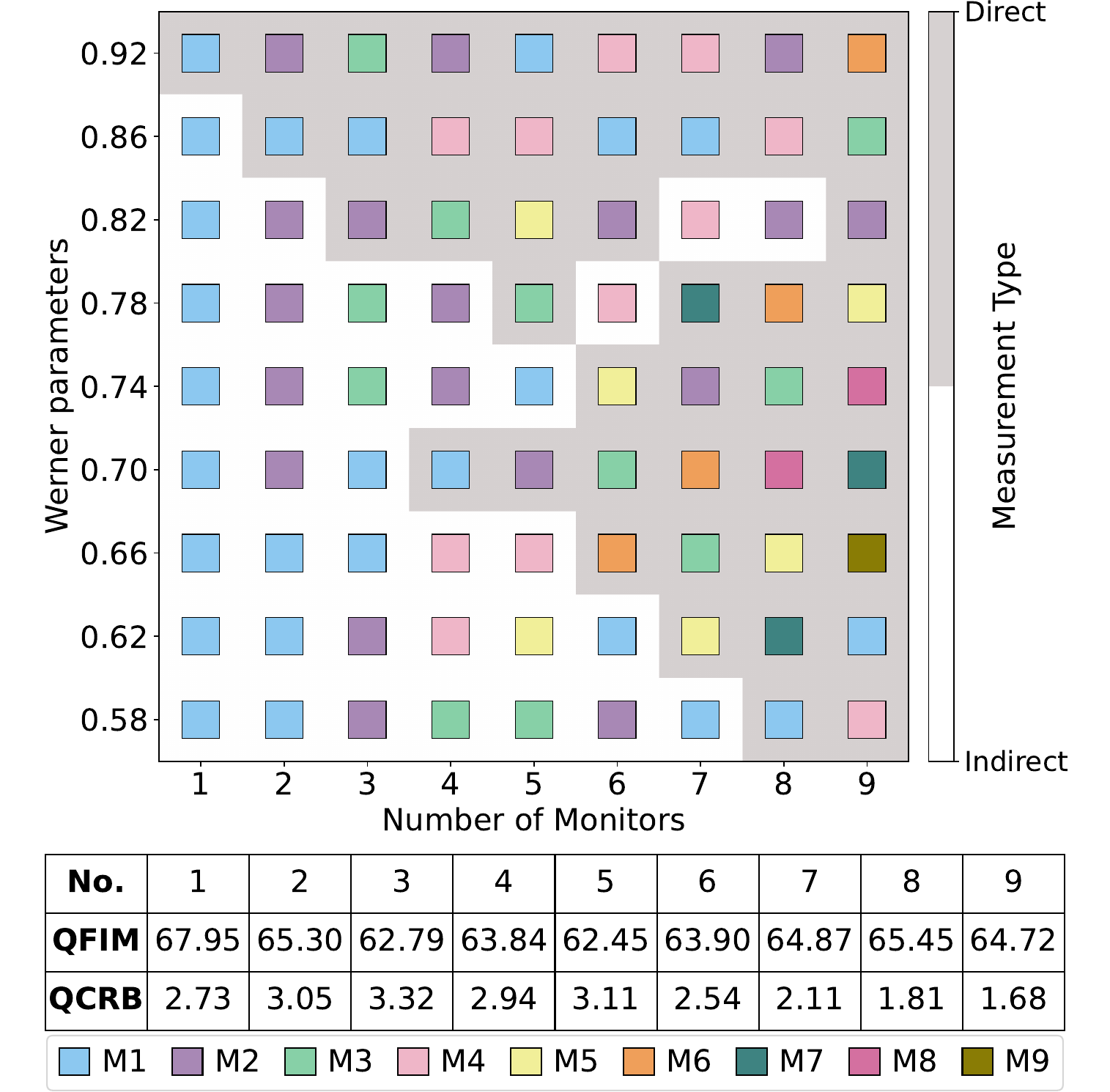}}%
\hfil
\subfloat[\label{fig2d}]{\includegraphics[scale=0.25]{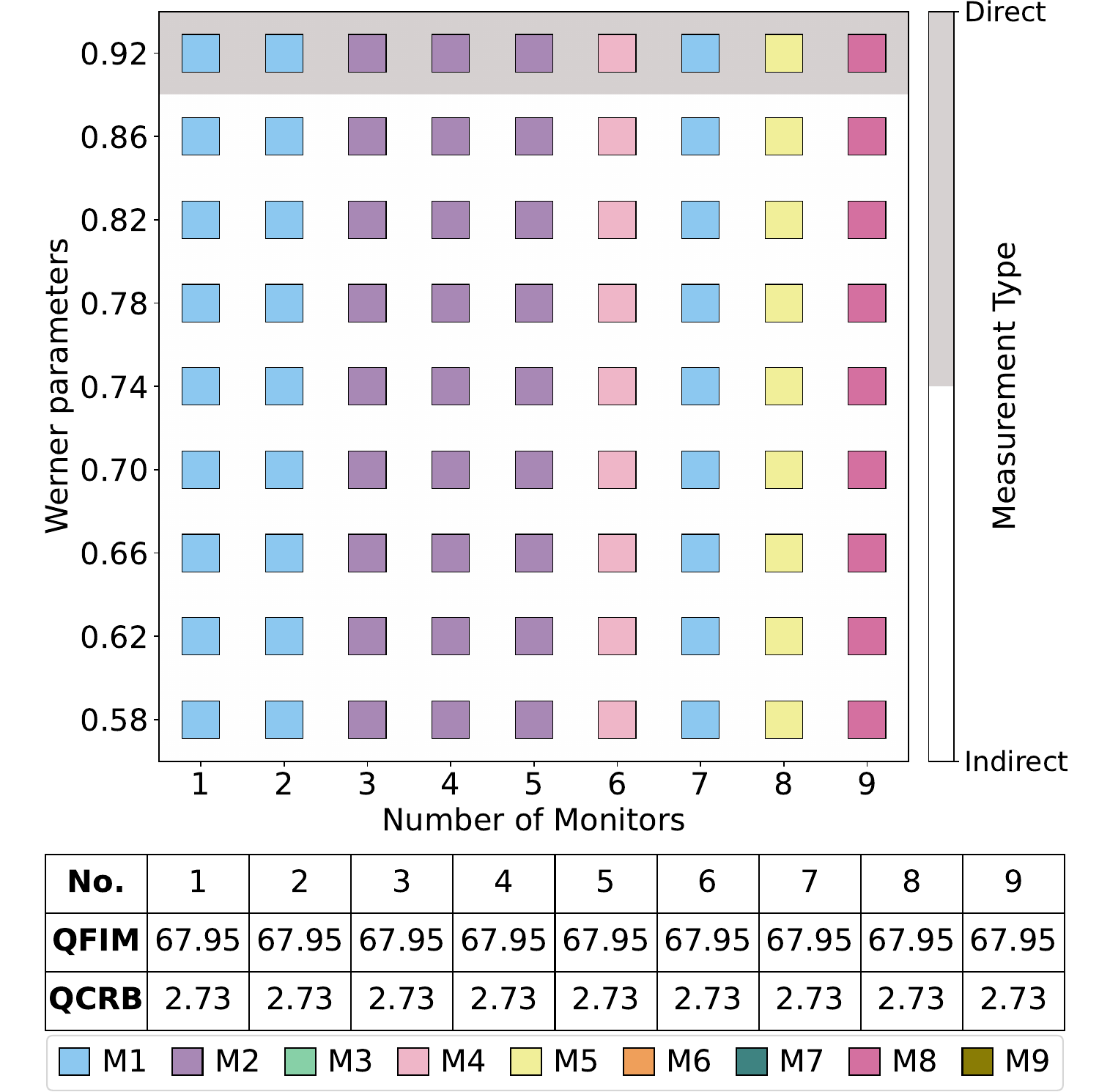}}%
\caption{Monitor placement results for a ten-node star network with inhomogeneous Werner parameters without direct monitoring enforcement constraint~\eqref{9}, with up to nine available monitors. The x-axis denotes the number of deployed monitors, ranging from one to nine, while each row corresponds to a network link ordered in decreasing Werner parameter value. Indirect measurements are shown in white, and direct measurements are shown in light gray. The color inside each cell identifies the monitor assigned to the corresponding measurement. Subfigures show the results for (a) QMF and (b) QF. The corresponding QFIM trace and QCRB values for each number of deployed monitors are reported in the table below each subfigure. The variation in monitor index assignment is due to symmetry in the model, as all monitors are identical and interchangeable.}
\label{fig_sim_results1a}
\end{figure*}

In stars, indirect monitoring is performed through two-hop paths passing through the central node regardless of network size. This probe homogeneity renders the formulation defined with constraints (\ref{lambh2}) and (\ref{lambh}) equivalent to the formulation with constraint (\ref{new}). We exclude the central node of the star from the candidate monitor locations, since placing a single monitor at the center would allow every link to be directly monitored, making the monitor placement problem trivial and eliminating the need to study indirect monitoring.

The monitor placement results for the ten-node star network with heterogeneous Werner parameters are presented in Fig.~\ref{fig_sim_results1} for the formulation with the direct monitoring enforcement constraint~\eqref{9}, and in Fig.~\ref{fig_sim_results1a} for the formulation without this constraint, with the number of monitors varied from one to nine. The QMF results are shown in Figs.~\ref{fig2a} and~\ref{fig2c}, while the QF results are shown in Figs.~\ref{fig2b} and~\ref{fig2d}. For both direct-enforced and non-enforced settings, the QMF formulation distributes indirectly monitored links more evenly among the available monitors, although more links are assigned to monitors incident to higher-quality links. For instance, four indirectly monitored links are assigned to the monitor incident to the link with the highest Werner parameter, while three indirectly monitored links are assigned to the monitor incident to the link with the second-highest Werner parameter. In contrast, in the QF formulation, all indirectly monitored links are assigned to the monitor incident to the link with the highest Werner parameter, which leads to monitor overloading. This concentration results from the T-optimal objective, since probing paths containing higher quality links provide larger QFIM contributions. Consequently, maximizing the QFIM trace favors assignments through the highest quality monitor rather than balancing the estimation information
across link parameters.

We now discuss the results obtained under the direct monitoring enforcement constraint. In the QF formulation as shown in Fig. \ref{fig2b}, the QCRB decreases as the number of monitors increases, from 2.73 for one monitor to 1.68 for nine monitors, indicating improved estimation precision with additional direct measurements. For QMF, shown in Fig.~\ref{fig2a}, the QCRB increases to 3.32 when three monitors are deployed and then decreases as more monitors become available. For eight and nine monitors, QMF and QF give identical QFIM and QCRB values, as the availability of monitors allows most links to be directly measured.

Without the direct-monitoring enforcement constraint, the QF formulation no longer benefits from increasing the number of deployed monitors, as shown in Fig.~4b. The QFIM and QCRB remain constant at $67.95$ and $2.73$, respectively, for all monitor counts. This behavior follows from
the T-optimal objective favoring the high-QFIM paths associated with the highest-quality monitor. Hence, additional monitors need not improve the poorly estimated parameter directions that dominate the QCRB. This differs from the direct-enforced QF case, where the QCRB decreases from 2.73 to 1.68 as more monitors are deployed. For QMF, the QFIM and QCRB values are unchanged after removing the direct monitoring enforcement constraint as shown in Fig. \ref{fig2c}. This indicates that for this formulation, the monitoring-overhead constraint already enforces sufficient distribution of measurement assignments across monitors. Therefore, direct monitoring enforcement has its strongest effect in the QF formulation.

We further evaluate both formulations by comparing their QCRBs, i.e., the trace of the QFIMs inverse, and monitoring-overhead values in both homogeneous and heterogeneous noise settings.

Fig. \ref{fig3a} shows the results for the homogeneous star network. Since all links have the same Werner parameter, the information contribution of each link is identical. Therefore, the QCRB depends mainly on the number of monitors rather than on their specific placement. As the number of monitors increases, more links can be directly monitored, and the trace of the QCRB decreases monotonically. The QF and QMF formulations give the same QCRB values both with and without Constraint (\ref{9}), showing that the direct-monitoring enforcement constraint does not change the estimation accuracy in the homogeneous case.

However, the monitoring overhead behaves differently for the two formulations. In the QF formulation, the objective focuses only on improving the QFIM and does not explicitly balance the assignments across monitors. As a result, the maximum monitoring overhead decreases gradually as additional monitors are introduced. In contrast, QMF explicitly accounts for monitoring overhead, leading to a more balanced distribution of link assignments. Thus, in the homogeneous case, QMF improves load balancing without sacrificing estimation accuracy.

Fig. \ref{fig3b} shows the corresponding results for the heterogeneous star network. In this case, the link qualities are different, so the monitor placement and link-assignment decisions strongly affect the QCRB. For QF with Constraint (\ref{9}), the QCRB decreases monotonically as the number of monitors increases. This is because the constraint forces links incident to monitors to be directly measured, so additional monitors increase the number of direct measurements and improve estimation accuracy. Without Constraint (\ref{9}), the QF formulation gives an almost constant QCRB and maintains a high monitoring overhead. This indicates that, when direct monitoring is not enforced, the optimizer can route many links through the monitor connected to high-quality link instead of using the newly placed monitors directly. It prevents the additional monitors from effectively reducing the overhead or improving the QCRB. For QMF, the overhead decreases significantly because the formulation penalizes high monitoring load. This load balancing can cause a temporary QCRB increase for intermediate monitor numbers, as some links are assigned to monitors associated with lower quality links. When the number of monitors increases, most links can be directly monitored, and the QF and QMF results converge.

Overall, these results show that Constraint (\ref{9}) is especially important in heterogeneous networks. It ensures that additional monitors are used for direct measurements whenever possible, which improves the QCRB and allows effective monitor placements. At the same time, QMF dominates QF under homogeneous noise conditions because it achieves the same precision as QF while attaining lower monitor overhead.

\begin{figure*}[t]
\centering
\subfloat[\label{fig3a}]{\includegraphics[scale=0.21]{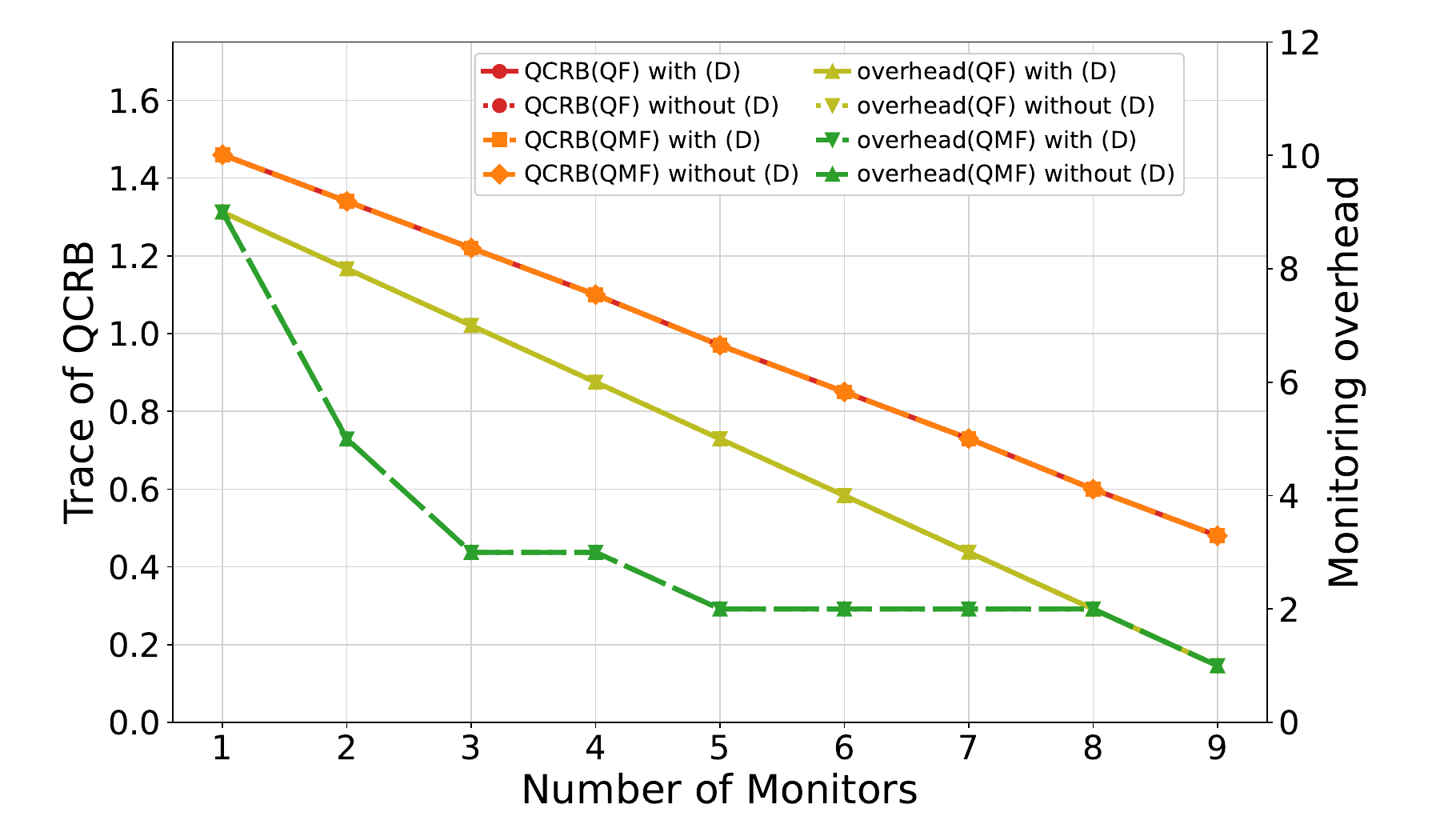}}%
\hfil
\subfloat[\label{fig3b}]{\includegraphics[scale=0.21]{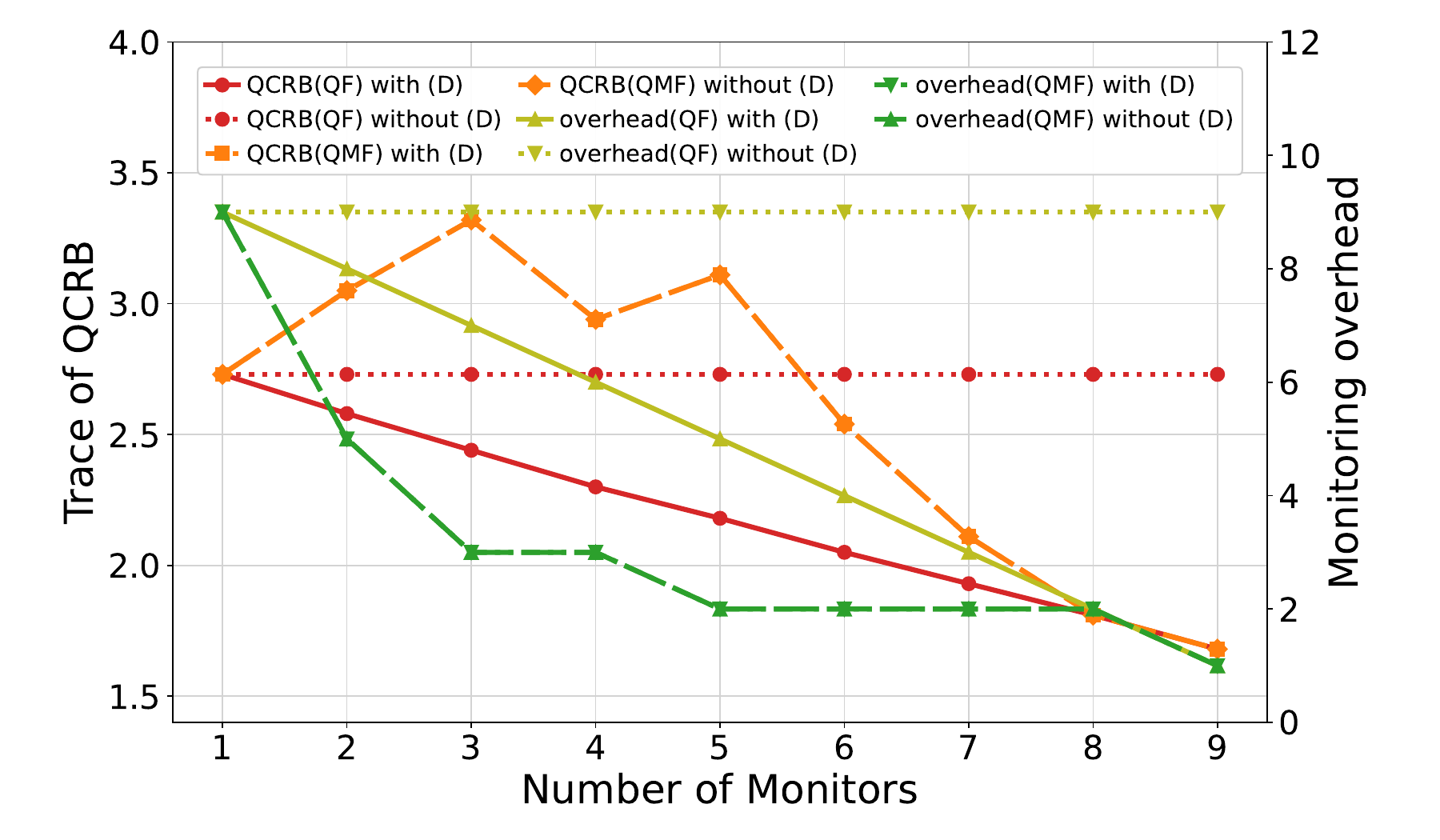}}%
\caption{QCRB and monitor overhead comparison for the ten-node star network with (a) uniform noise with all Werner parameters equal to 0.92 and (b) heterogeneous noise as shown in Figs.~\ref{fig_sim_results1},~\ref{fig_sim_results1a}. $D$ indicates direct monitoring enforcement constraint \eqref{9}.  }
\label{fig_sim_results2}
\end{figure*}

\subsubsection*{Tree Networks}
\begin{figure*}[t]
\centering
\subfloat[\label{fig4a}]{\includegraphics[scale=0.28]{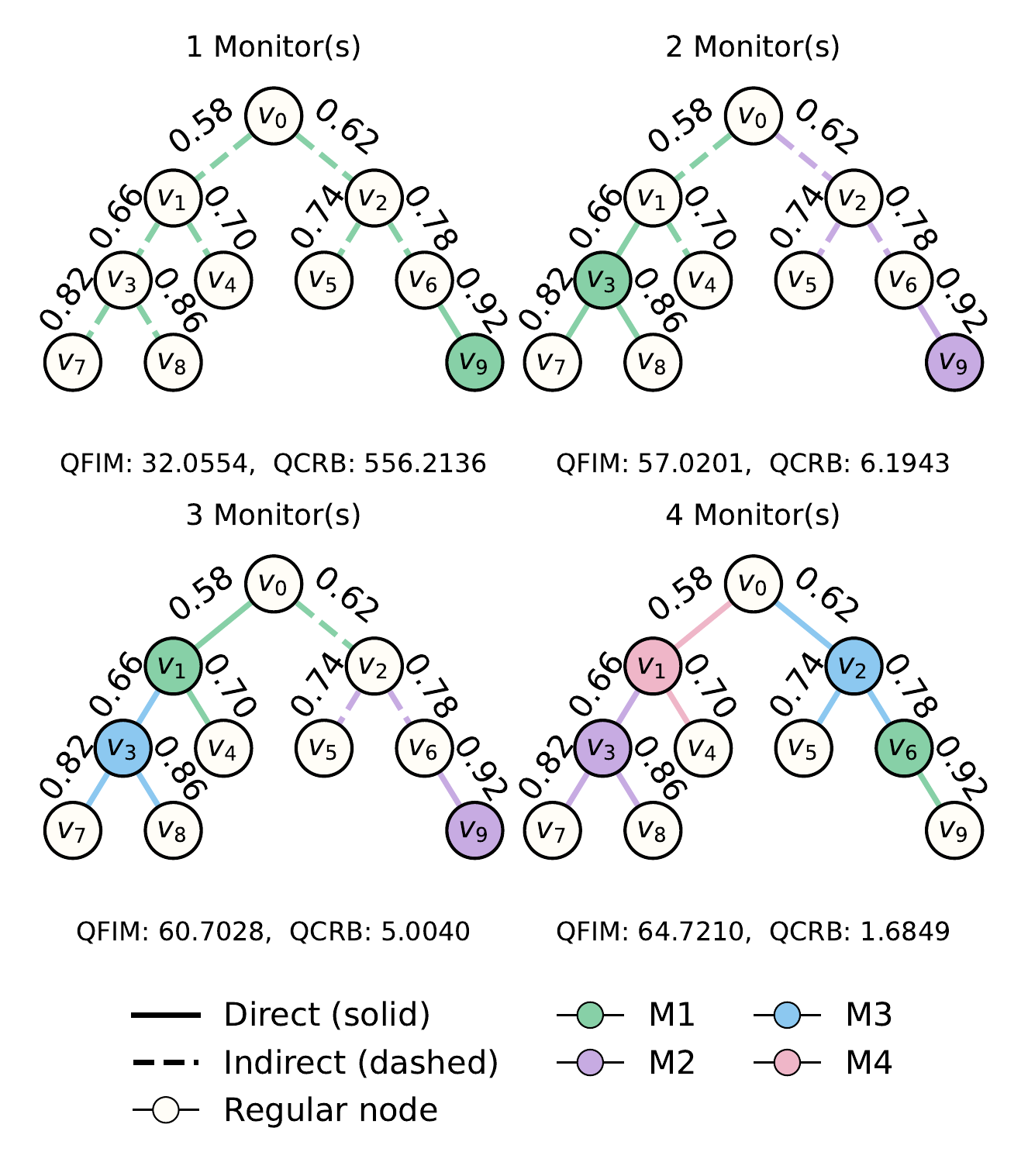}}%
\hfil
\subfloat[\label{fig4b}]{\includegraphics[scale=0.28]{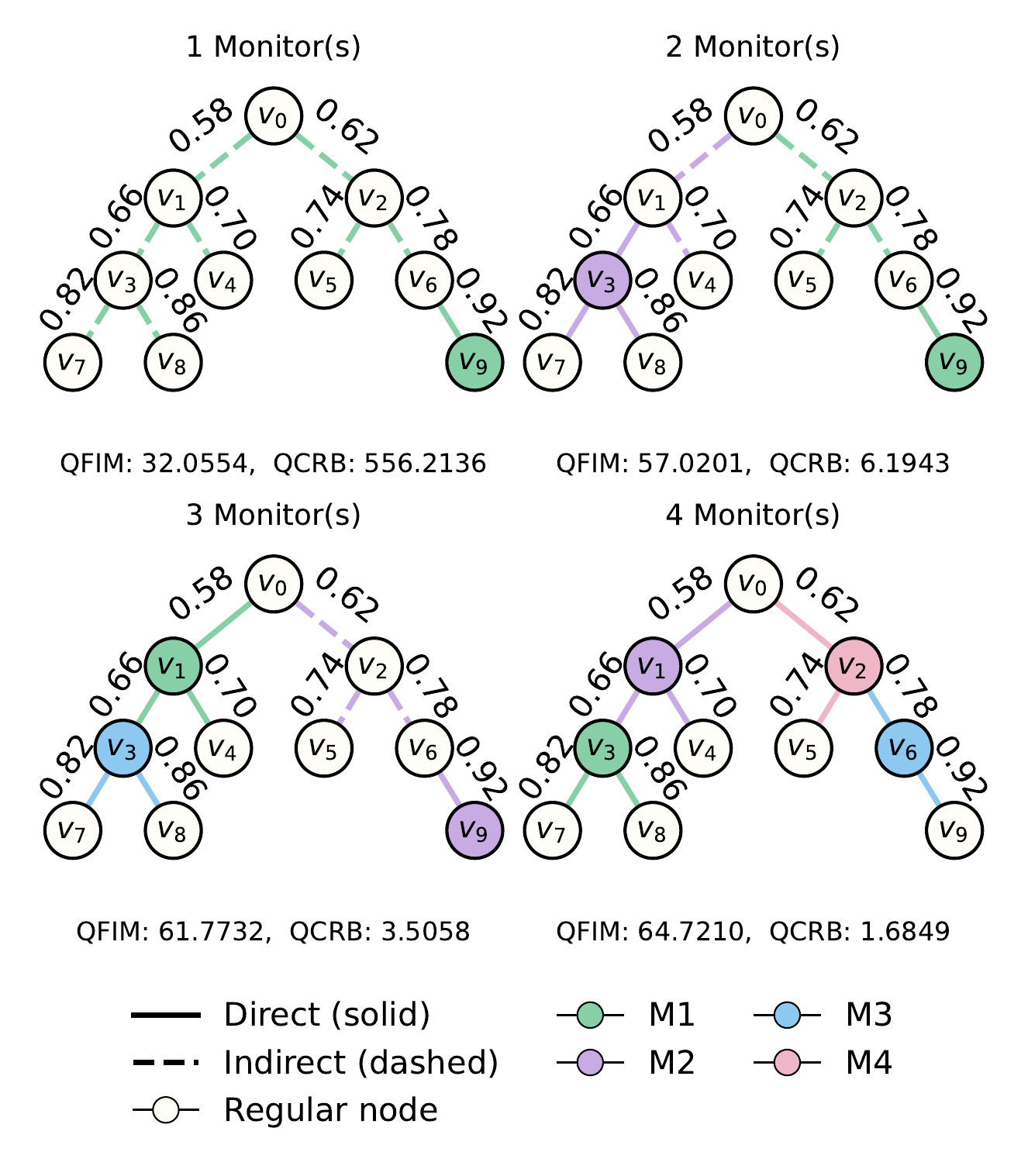}}%
\caption{Monitor placement results for the ten-node tree network with up to four monitors and inhomogeneous Werner parameters under direct monitoring enforcement constraint (\ref{9}). Node colors indicate monitor placements, and edge colors reflect the monitor performing the measurement. Solid edges denote links that are directly measured, while dashed edges represent indirect measurements. All links in the tree can be directly measured using four monitors, represented by four distinct colors, with each edge color corresponding to the monitor (of the same color) responsible for measuring that link. Subfigures show results from (a) QMF (b) QF. The variation in monitor index assignment is due to symmetry in the model, as all monitors are identical and interchangeable. }
\label{fig_sim_results4}
\end{figure*}

\begin{figure*}[t]
\centering
\subfloat[\label{fig4c}]{\includegraphics[scale=0.28]{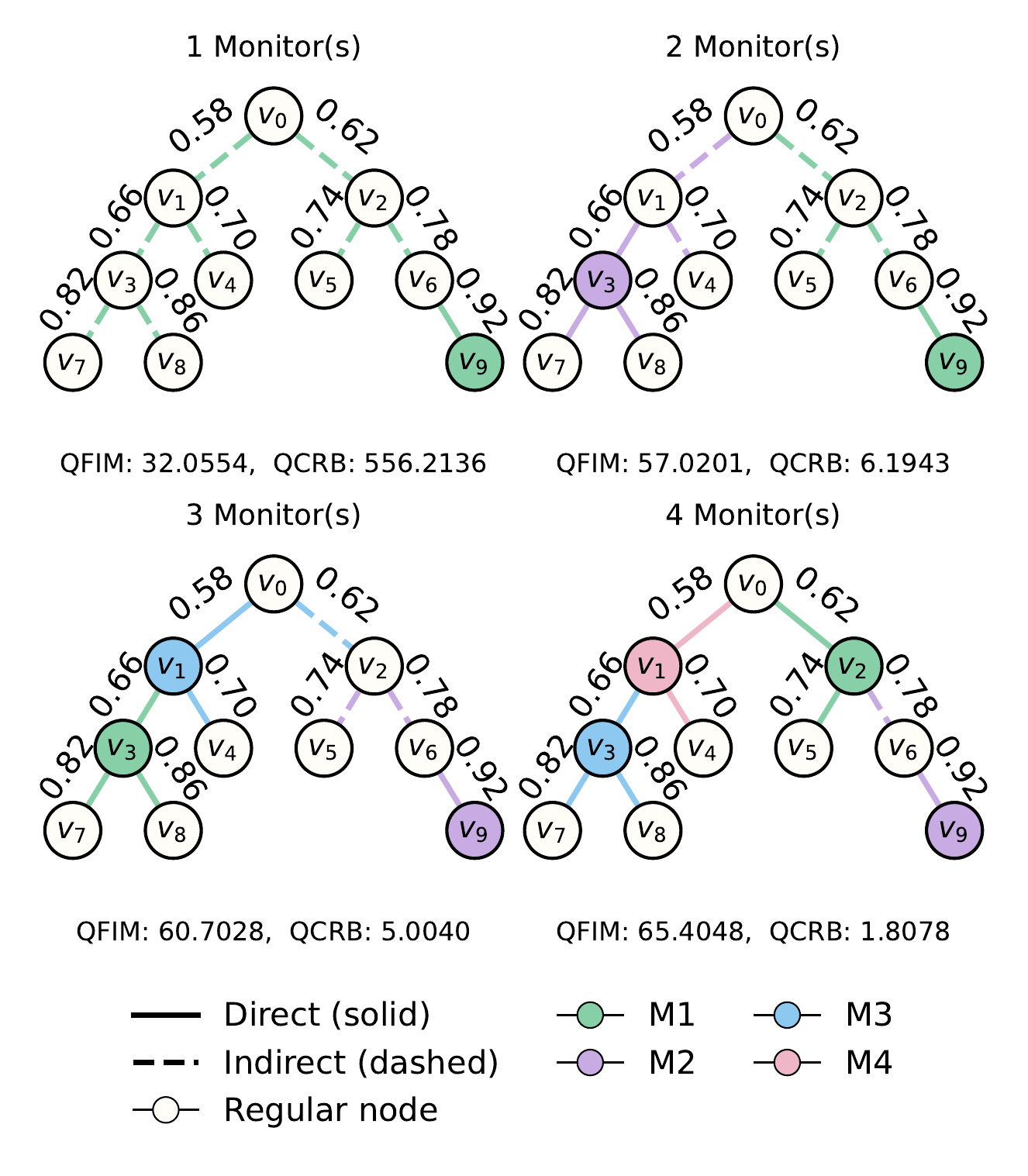}}%
                      \hfil
\subfloat[\label{fig4d}]{\includegraphics[scale=0.28]{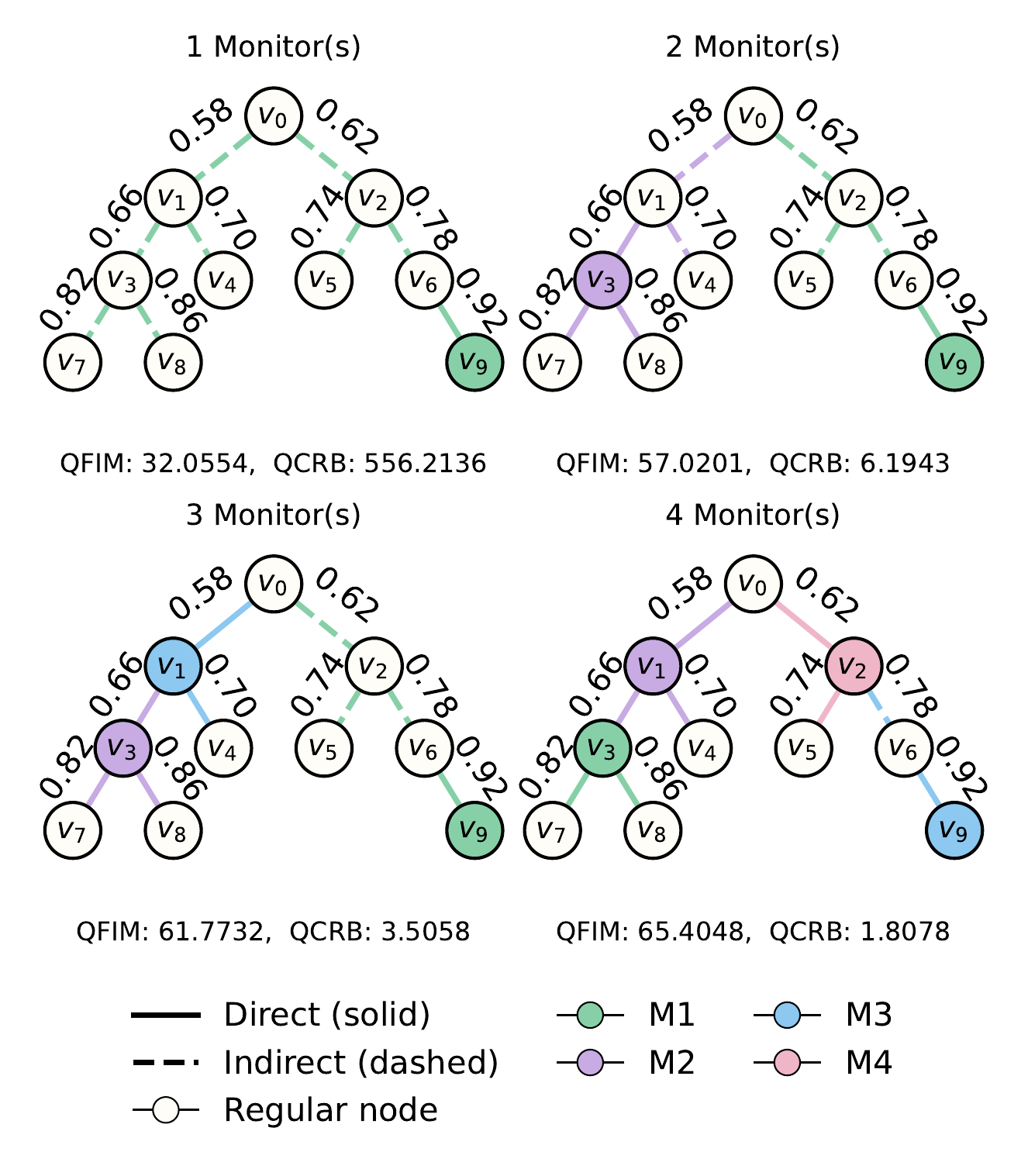}}%
\caption{Monitor placement results for the ten-node tree network with up to four monitors and inhomogeneous Werner parameters without direct monitoring enforcement constraint (\ref{9}). Node colors indicate monitor placements, and edge colors reflect the monitor performing the measurement. Solid edges denote links that are directly measured, while dashed edges represent indirect measurements. All links in the tree can be directly measured using four monitors, represented by four distinct colors, with each edge color corresponding to the monitor (of the same color) responsible for measuring that link. Subfigures show results from (a) QMF (b) QF. The variation in monitor index assignment is due to symmetry in the model, as all monitors are identical and interchangeable. }
\label{fig_sim_results4a}
\end{figure*}
 
Trees are more complex than stars and solving the optimal monitor placement on them allows probing the behavior of QMF and QF on less homogeneous topologies. Indirect probes in trees are, in general, paths with more than two hops. In this case, the ILP formulation defined with Constraints (\ref{lambh2}) and (\ref{lambh}) yields different solutions for the QF than the one defined with Constraint (\ref{new}). We focus on the formulation with (\ref{lambh2}) and (\ref{lambh}) and report results with (\ref{new}) in Appendix C. We also compare the results with and without direct monitoring enforcement constraint~\eqref{9}.

Similarly to the star case, for both direct-enforced and non-enforced cases as shown in Figs. \ref{fig_sim_results4} and \ref{fig_sim_results4a} respectively, QMF distributes the monitoring load more evenly across monitors while still prioritizing high-quality paths. 
For instance, with three monitors, covering the highest Werner parameters regions of the network while ensuring that no single monitor is overloaded. QF places monitors at the nodes connected to the links with the highest Werner parameters, and path assignments favor probes with high end-to-end Werner parameters. However, optimal monitor placement is dominated by path quality. Although QF achieves high estimation precision, it leads to more pronounced load concentration when compared to QMF. This behavior is exemplified in the case of three monitors, where the QF and QMF assignments yield a maximum monitoring load of five and three, respectively.

Under the direct monitoring enforcement constraint, QMF and QF give the same performance for one, two, and four monitors as shown in Figs. \ref{fig4a} and \ref{fig4b}. The large QCRB with one monitor is due to several links being measured indirectly through longer paths, while adding a second monitor improves estimation by increasing direct measurements and shortening indirect paths. The main difference occurs with three monitors, where QF achieves a lower QCRB than QMF, indicating that QMF trades some estimation accuracy for more balanced measurement assignments.

When the direct monitoring enforcement constraint is removed, the results remain unchanged for one, two, and three monitors as shown in Figs. \ref{fig4c} and \ref{fig4d}. However, with four monitors, both QMF and QF obtain a slightly higher QFIM but a worse QCRB compared with the direct-enforced case. This indicates that the direct monitoring enforcement constraint is beneficial as it improves QCRB performance.

Furthermore, the results for the tree network reveal that QMF explicitly encourages a more uniform distribution of monitoring tasks than QF, as expected. In contrast, QF may concentrate measurements on paths providing larger QFIM contributions, since its objective maximizes the total QFIM trace without explicitly accounting for monitoring overhead.

\section*{Discussion}\label{sec12}
This work demonstrates that the placement of monitor nodes, the allocation of quantum measurements and the design of probing protocols play a key role in the end-to-end characterization of quantum links in quantum network tomography. We proposed a cyclic sequential QNT protocol for arbitrary network topologies and developed a fixed-point learnability framework that establishes conditions under which all link parameters can be recovered from the selected probes. We further characterized global identifiability using the rank of the path-link incidence matrix and showed that the CSQP learnability conditions lead to a nonsingular QFIM, indicating that all link parameters are uniquely identifiable. For star topologies, we also derive closed form expressions for the maximum likelihood estimators and show that, for an interior feasible joint solution in the considered two-hop setting, the likelihood equation for a directly monitored link decouples from the indirect paths, reducing its MLE to the direct link estimate. By modeling the monitor placement and measurement assignment problem within an integer linear optimization framework, we show that different optimization objectives lead to distinct operational regimes: trace-driven formulations (QF) tend to concentrate measurements on nodes associated with high-quality links, while incorporating resource constraints (QMF) results in a more balanced distribution that supports scalable operation. The proposed framework is evaluated on star and hierarchical tree networks.

The present framework has several limitations. The optimization assumes prior knowledge of the link parameters and employs the QFIM trace as a surrogate objective to preserve the linearity of the optimization formulation. Furthermore, the present analysis is conducted under idealized assumptions; extending the framework to account for realistic non-idealities, including loss, decoherence, and imperfect Bell-state measurements, constitutes an important and challenging direction for future work.

\section*{Appendix A: Derivation of QFIM Expressions }\label{a1}
In this section, we present detailed derivations supporting the results discussed in the main text. Here, we compute the Quantum Fisher Information Matrix (QFIM) for the Werner-state parameters, which is applicable to arbitrary quantum networks. The derivation follows the mathematical framework outlined in \cite{paper}, with additional details provided here for completeness. Since the Werner states considered here are diagonal in the Bell basis and their eigenvectors are independent of the parameters, the QFIM can be expressed solely in terms of the eigenvalues $\lambda_k$ of the density matrix $\rho(\boldsymbol{w})$
\cite{qfim} as
\begin{equation}
\label{110}
F_{ig} = \sum_{k} \frac{1}{\lambda_k} \left( \frac{\partial \lambda_k}{\partial w_i} \right) \left( \frac{\partial \lambda_k}{\partial w_g} \right).
\end{equation}

\subsubsection*{Direct Monitoring}
The Werner state $\rho(w_i^2)$ corresponding to the direct measurement of a link $i$ based on the probe state distribution and measurement method described in the result section can be expressed as
\begin{equation}
\label{22}
\rho(w_i^2) = w_i^2 \, \phi_{00} + (1 - w_i^2)\frac{I_4}{4},
\end{equation}
where $\phi_{00}=\ket{\phi^+}\bra{\phi^+}$ denotes the Bell state and $I_4$ is the identity operator on the two-qubit Hilbert space. The eigenvalues are: 
\begin{align}
\lambda_1 = \frac{1 + 3w_i^2}{4}, 
\qquad
\lambda_{2,3,4} = \frac{1 - w_i^2}{4}.
\end{align}
Here, $\lambda_1$ corresponding to $\phi_{00}$ and $\lambda_{2,3,4}$ corresponding to other Bell states. Substituting these eigenvalues into Eq. (\ref{110}) and evaluating the derivatives with respect to $w_i$ yields a closed form expression for the QFIM corresponding to direct measurement.
\begin{align}
F_{ii}^{(D)} &= \frac{4}{1 + 3 w_i^2} 
\cdot \frac{\partial}{\partial w_i} \left( \frac{1 + 3 w_i^2}{4} \right) 
\cdot \frac{\partial}{\partial w_i} \left( \frac{1 + 3 w_i^2}{4} \right)\nonumber \\ 
&\quad + 3 \cdot \frac{4}{1 - w_i^2} 
\cdot \frac{\partial}{\partial w_i} \left( \frac{1 - w_i^2}{4} \right) 
\cdot \frac{\partial}{\partial w_i} \left( \frac{1 - w_i^2}{4} \right) \\[1em]
&= \frac{4}{1 + 3 w_i^2}. \left(\frac{6w_i}{4}\right)^2 + \frac{12}{1 - w_i^2}.\left(\frac{-2w_i}{4}\right)^2 \\[1em]
&= \frac{9 w_i^2}{1 + 3 w_i^2} + \frac{3 w_i^2}{1 - w_i^2} \\[1em]
F_{ii}^{(D)} &= \frac{12 w_i^2}{(1 + 3 w_i^2)(1 - w_i^2)} \label{1}
\end{align}

\subsubsection*{Indirect Monitoring}
For an indirect monitoring path $P_{ab}$, let
\begin{align}
    \rho_{ab} = \rho\!\left( \prod_{l \in P_{ab}} w_l^{\,2} \right)
\end{align}
denote the end-to-end Werner state generated by entanglement distribution and swapping along the path $P_{ab}$ between nodes $a$ and $b$. The eigenvalues are therefore
\begin{align}
\lambda_1 = \frac{1+3\prod_{l \in P_{ab}} w_l^{\,2}}{4}, 
\qquad
\lambda_{2,3,4} = \frac{1-\prod_{l \in P_{ab}} w_l^{\,2}}{4}.
\end{align}

Substituting the eigenvalues into Eq.~(\ref{110}) and computing the derivatives with respect to $w_i$ and $w_g$ gives the closed form expression for the QFIM corresponding to indirect measurement.

\begin{align}
F_{ig}^{(I)} &= \frac{4}{1 + 3 \prod_{l \in P_{ab}} w_l^2} \cdot 
\frac{\partial}{\partial w_i} \left( \frac{1 + 3 \prod_{l \in P_{ab}} w_l^2}{4} \right)
\cdot
\frac{\partial}{\partial w_g} \left( \frac{1 + 3 \prod_{l \in P_{ab}} w_l^2}{4} \right)\nonumber \\[1em]
&\quad + 3 \cdot \frac{4}{1 - \prod_{l \in P_{ab}} w_l^2} \cdot 
\frac{\partial}{\partial w_i} \left( \frac{1 - \prod_{l \in P_{ab}} w_l^2}{4} \right)
\cdot
\frac{\partial}{\partial w_g} \left( \frac{1 - \prod_{l \in P_{ab}} w_l^2}{4} \right) \\[1em]
&= \frac{4}{1 + 3 \prod_{l \in P_{ab}} w_l^2} \cdot 
\left( \frac{9}{4} w_i w_g \prod_{\substack{{l \in P_{ab}}\\l \ne i}} w_l^2 
\prod_{\substack{{l \in P_{ab}}\\l \ne g}} w_l^2 \right) \nonumber \\[0.5em]
&\quad + \frac{12}{1 - \prod_{l \in P_{ab}} w_l^2} \cdot 
\left( \frac{1}{4} w_i w_g \prod_{\substack{{l \in P_{ab}}\\l \ne i}} w_l^2 
\prod_{\substack{{l \in P_{ab}}\\l \ne g}} w_l^2 \right) \\[1em]
&= 
\frac{9 w_i w_g}{1 + 3 \prod_{l \in P_{ab}} w_l^2} 
\cdot \prod_{\substack{{l \in P_{ab}}\\l \ne i}} w_l^2 \cdot \prod_{\substack{{l \in P_{ab}}\\l \ne g}} w_l^2 
+ 
\frac{3 w_i w_g}{1 - \prod_{l \in P_{ab}}w_l^2} 
\cdot \prod_{\substack{{l \in P_{ab}}\\l \ne i}} w_l^2 \cdot \prod_{\substack{{l \in P_{ab}}\\l \ne g}} w_l^2\\[1em]
F_{ig}^{(I)} &= \frac{12 w_i w_g}{(1 + 3 \prod_{l \in P_{ab}} w_l^2)(1 - \prod_{l \in P_{ab}}w_l^2)} \bigg( \prod_{\substack{{l \in P_{ab}}\\l \ne i}} w_l^2 \cdot \prod_{\substack{{l \in P_{ab}}\\l \ne g}} w_l^2\bigg). \label{2}
\end{align}

\subsection*{Numerical Analysis using QFIM}
\begin{figure*}[t]
\centering
\subfloat[\label{figml}]{\includegraphics[scale=0.25]{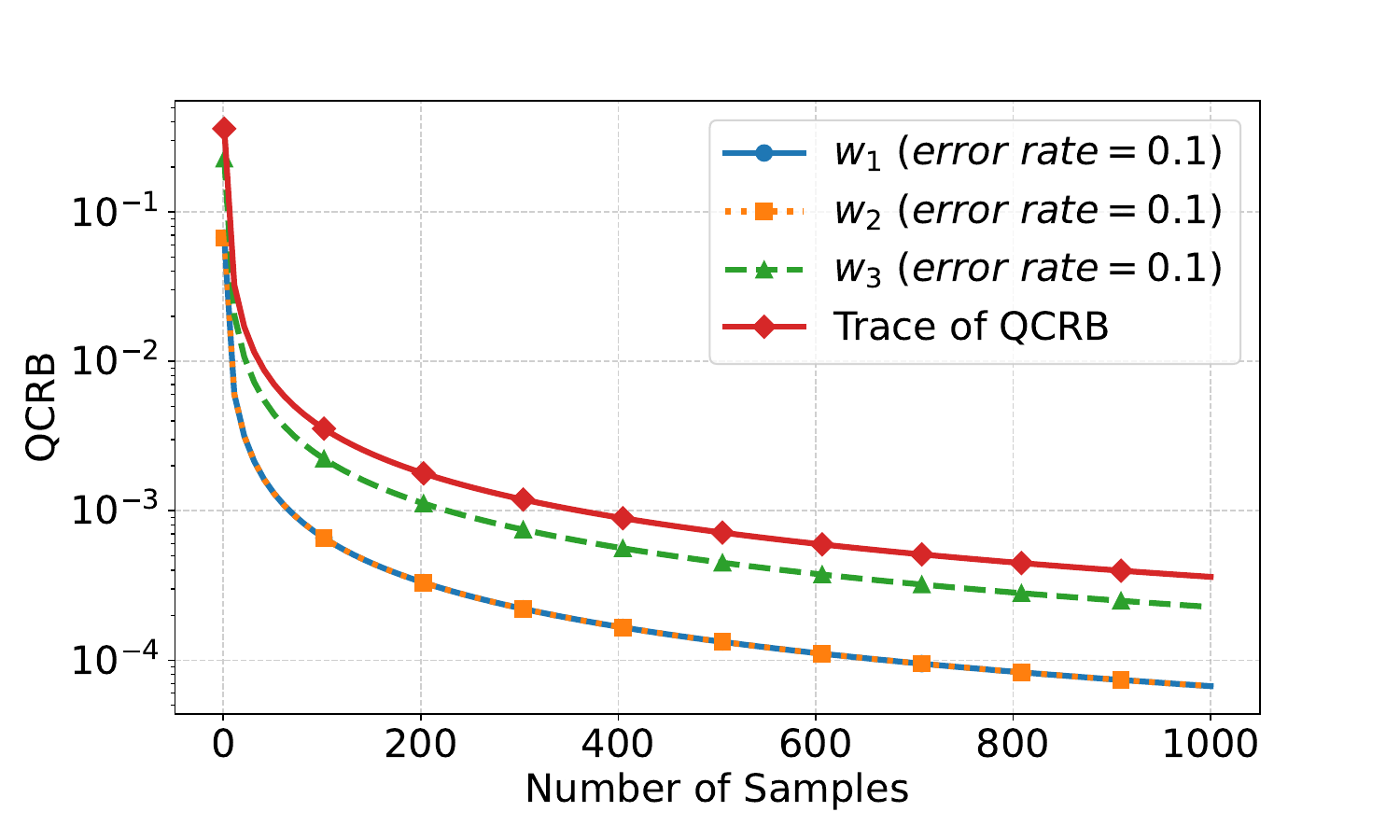}}%
\hfil
\subfloat[\label{figcrb}]{\includegraphics[scale=0.25]{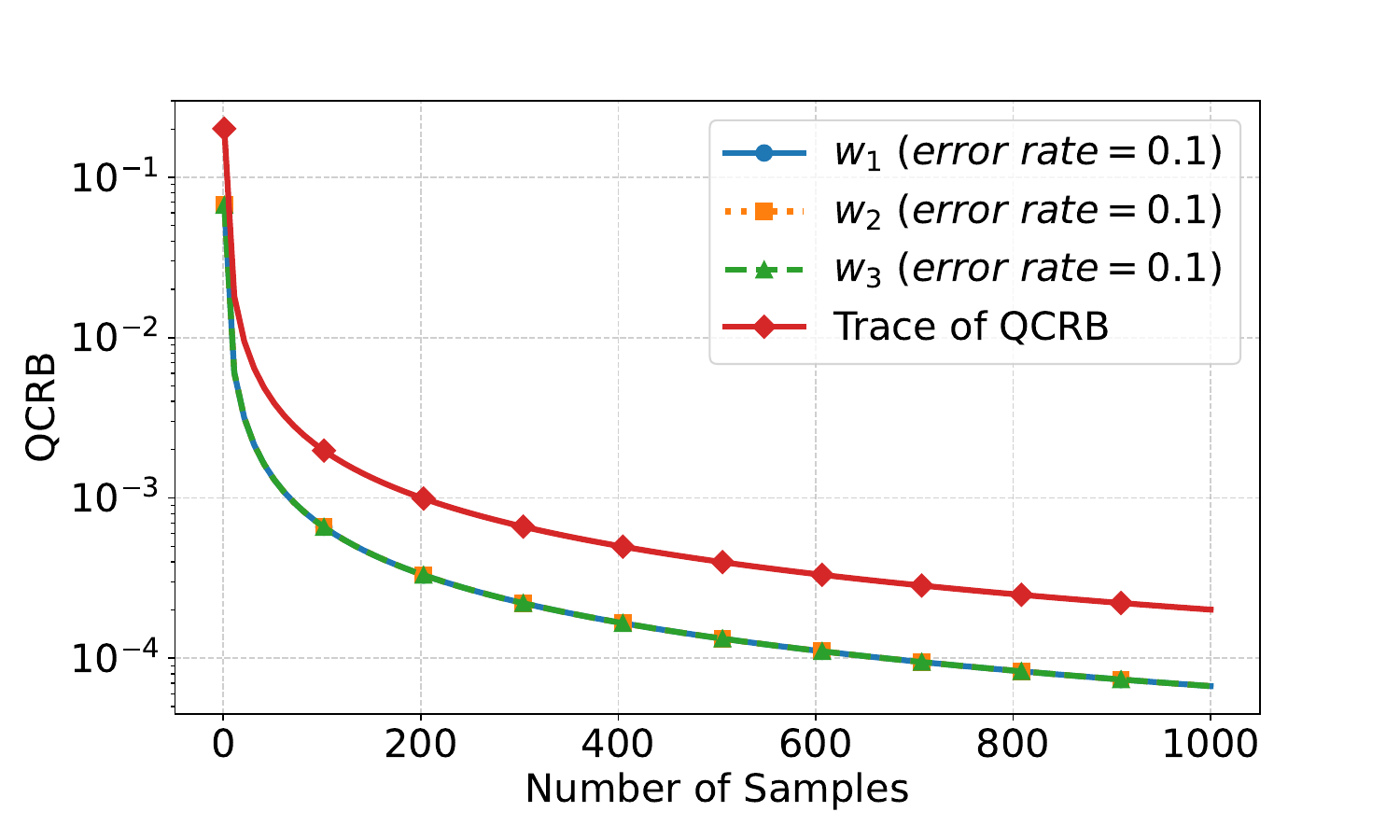}}%
\caption{Numerical Analysis using QFIM: (a) QCRB of the three estimators and QCRB trace, with error rate = 0.1 in two-monitor configuration, (b) QCRB of three estimators and trace of QCRB in Three-Monitor configuration for error rate = 0.1.}
\end{figure*}
We now compare two- and three-monitor configurations in a four-node star network, where $v_0$ denotes the hub node and $v_1$, $v_2$, and $v_3$ denote the end nodes. Fig. \ref{figml} presents the QCRB values of the estimators $w_1$, $w_2$, and $w_3$ corresponding to the links $e_1=(v_0,v_1)$, $e_2=(v_0,v_2)$, and $e_3=(v_0,v_3)$, under the two-monitor configuration with monitors placed at the nodes $v_1$ and $v_2$. The results show that the QCRB values for $w_1$ and $w_2$ are identical, while the QCRB value for $w_3$ is significantly higher. This disparity reflects a lower estimation precision for $w_3$, which is indirectly monitored. Fig. \ref{figcrb} displays the QCRB values for the same three estimators in a three-monitor configuration with monitors placed at the nodes $v_1$, $v_2$, and $v_3$. In this setup, the QCRB values for $w_1$, $w_2$, and $w_3$ are equal, demonstrating uniform and improved estimation accuracy across all links when directly measured by the monitor.

\section*{Appendix B: Derivation of MLE Expressions }\label{a2}
In this section, we provide the derivation of the maximum likelihood estimators (MLEs) used for Werner-parameter estimation in the $n$-node star network. The MLE $\hat{w}$ is the parameter value that maximizes the likelihood of the observed measurement outcomes. For a parameter $w$, the MLE is defined as
\begin{equation}
\label{eq:supp_mle_definition}
    \hat{w}
    =
    \underset{w}{\arg\max}\; \ell(D|w),
\end{equation}
where $D$ denotes the measurement dataset and $\ell(D|w)$ denotes the likelihood of observing $D$ under parameter $w$.

For a path $P_{ab}$, let $N^{ab}$ denote the total number of measurements performed on the probe associated with $P_{ab}$, and let $N_{00}^{ab}$ denote the number of outcomes equal to $\phi_{00}$.

\subsection*{Direct Monitoring}
Consider a directly monitored link $e_i=(v_a,v_b)$, so that the corresponding probe path is $P_{ab}=e_i$. The cyclic probe has effective Werner parameter $w_i^2$, and the resulting state is
\begin{equation}
\label{eq:supp_direct_state}
    \rho(w_i^2)
    =
    w_i^2 \phi_{00}
    +
    (1-w_i^2)\frac{I_4}{4}.
\end{equation}
The Bell-basis outcome probabilities are
\begin{align}
    \Pr(\phi_{00})
    &=
    \frac{1+3w_i^2}{4},\\
   \Pr(\phi_{mn})
&=
\frac{1-w_i^2}{4},
\qquad mn\in\{01,10,11\}.
\end{align}

The log-likelihood is given by
\begin{align}
\label{eq:supp_direct_loglikelihood}
    \log \ell(D|w_i)
    =
    N_{00}^{ab}
    \log\left(\frac{1+3w_i^2}{4}\right)
    +
    \left(N^{ab}-N_{00}^{ab}\right)
    \log\left(\frac{1-w_i^2}{4}\right).
\end{align}
Differentiating with respect to $w_i$ gives
\begin{align}
    \frac{\partial \log \ell}{\partial w_i}
    &=
    \frac{6N_{00}^{ab}w_i}{1+3w_i^2}
    -
    \frac{2\left(N^{ab}-N_{00}^{ab}\right)w_i}{1-w_i^2}.
\end{align}
Setting the derivative to zero and multiplying by
$4(1+3w_i^2)(1-w_i^2)$ gives
\begin{align}
    24N_{00}^{ab}w_i(1-w_i^2)
    -
    8\left(N^{ab}-N_{00}^{ab}\right)w_i(1+3w_i^2)
    =
    0.
\end{align}
This reduces to
\begin{equation}
    32N_{00}^{ab}
    -
    8N^{ab}
    -
    24N^{ab}w_i^2
    =
    0.
\end{equation}
Hence, the MLE of the directly monitored link parameter is
\begin{equation}
\label{eq:supp_direct_mle}
    \hat{w}_i
    =
    \sqrt{
    \frac{4N_{00}^{ab}-N^{ab}}{3N^{ab}}
    }.
\end{equation}

\subsection*{Indirect Monitoring}
We now consider the indirect monitoring setting for the star network. Let $v_a\in M$ be the monitor node, let $v_h$ denote the center node of the star, and let
\begin{equation}
e_q=(v_a,v_h)
\end{equation}
be the directly monitored common link. Let $\mathcal{I}_q$ denote the set of indices of links indirectly monitored through paths that contain $e_q$. For each $j\in\mathcal I_q$, the indirectly monitored link is
\begin{equation}
e_j=(v_h,v_j),
\end{equation}
and the corresponding indirect path is
\begin{equation}
P_{aj}=[v_a,v_h,v_j].
\end{equation}
Its end-to-end Werner parameter is
\begin{equation}
w_{aj}=w_qw_j.
\end{equation}
Thus, the corresponding probe state is
\begin{equation}
\label{eq:supp_indirect_state}
    \rho(w_{aj}^2)
    = w_{aj}^2 \phi_{00} + (1 - w_{aj}^2) \frac{I_4}{4}
    = w_q^2 w_j^2 \phi_{00} + (1 - w_q^2 w_j^2) \frac{I_4}{4}.
\end{equation}
The Bell-basis outcome probabilities are
\begin{align}
    \Pr(\phi_{00}) &= \frac{1 + 3w_q^2 w_j^2}{4}, \\
    \Pr(\phi_{mn})
&=
\frac{1-w_q^2w_j^2}{4},
\qquad mn\in\{01,10,11\}.
\end{align}

For each indirect path $P_{aj}$, define
\begin{equation}
    \label{eq:xaj}
    x_{aj} = w_{aj}^2 = w_q^2 w_j^2.
\end{equation}
The log-likelihood associated with the measurements on $P_{aj}$ is
\begin{equation}
\label{eq:supp_indirect_loglikelihood}
    \log \ell(D \mid x_{aj})
    = N_{00}^{aj} \log\left(\frac{1 + 3x_{aj}}{4}\right)
    + \left(N^{aj} - N_{00}^{aj}\right) \log\left(\frac{1 - x_{aj}}{4}\right).
\end{equation}
Differentiating with respect to $x_{aj}$ and setting the derivative to zero
yields
\begin{equation}
\label{71}
    \frac{3N_{00}^{aj}}{1 + 3x_{aj}} - \frac{N^{aj} - N_{00}^{aj}}{1 - x_{aj}} = 0.
\end{equation}
Solving for $x_{aj}$ yields the path-wise Maximum Likelihood Estimate (MLE):
\begin{equation}
\label{eq:supp_xaj_mle}
    \hat{x}_{aj} = \widehat{w_{aj}^2} = \frac{4N_{00}^{aj} - N^{aj}}{3N^{aj}}.
\end{equation}
Equivalently, the MLE of the end-to-end Werner parameter on path $P_{aj}$ is
\begin{equation}
\label{eq:supp_waj_mle}
    \hat{w}_{aj} = \sqrt{\frac{4N_{00}^{aj} - N^{aj}}{3N^{aj}}}.
\end{equation}

Since $x_{aj}=w_q^2w_j^2$ and $\partial x_{aj}/\partial w_j = 2w_q^2w_j \neq 0$ for $w_q,w_j\in(0,1)$, the likelihood condition with respect to $w_j$ is equivalent to the likelihood condition with respect to $x_{aj}$ in Eq.~\eqref{71}. For the interior joint MLE considered here, the path-wise estimates are assumed to satisfy the physical feasibility condition $0<\hat{x}_{aj}<\hat{w}_q^2$ for all $j \in \mathcal{I}_q$.

The direct link $e_q = (v_a, v_h)$ appears in both the direct probe $P_{ah} = e_q$ and the indirect probes $\{P_{aj} : j \in \mathcal{I}_q\}$. Therefore, estimating $w_q$ incorporates contributions from both direct and indirect measurements. The joint log-likelihood terms involving $w_q$ are
\begin{align}
\label{eq:supp_common_joint_likelihood}
    \log \ell_q
    &= N_{00}^{ah} \log\left(\frac{1 + 3w_q^2}{4}\right)
    + \left(N^{ah} - N_{00}^{ah}\right) \log\left(\frac{1 - w_q^2}{4}\right) \nonumber \\
    &\quad + \sum_{j \in \mathcal{I}_q} \left[ N_{00}^{aj} \log\left(\frac{1 + 3w_q^2 w_j^2}{4}\right) + \left(N^{aj} - N_{00}^{aj}\right) \log\left(\frac{1 - w_q^2 w_j^2}{4}\right) \right].
\end{align}
Differentiating with respect to $w_q$ gives
\begin{align}
\label{eq:supp_common_score}
    \frac{\partial \log \ell_q}{\partial w_q}
    &= \frac{6 N_{00}^{ah} w_q}{1 + 3w_q^2} - \frac{2\left(N^{ah} - N_{00}^{ah}\right) w_q}{1 - w_q^2} \nonumber \\
    &\quad + \sum_{j \in \mathcal{I}_q} \left[ \frac{6 N_{00}^{aj} w_q w_j^2}{1 + 3w_q^2 w_j^2} - \frac{2\left(N^{aj} - N_{00}^{aj}\right) w_q w_j^2}{1 - w_q^2 w_j^2} \right].
\end{align}

Multiplying Eq. \eqref{eq:supp_common_score} by $w_q$ and using $w_q w_j^2 = \frac{\hat{x}_{aj}}{w_q}$ from Eq. \eqref{eq:xaj}, we define the indirect contribution term $C_q$ of Eq. \eqref{eq:supp_common_score} as
\begin{equation}
\label{eq:supp_Cq_def}
    C_q = \sum_{j \in \mathcal{I}_q} \left[ \frac{6 N_{00}^{aj} \hat{x}_{aj}}{1 + 3\hat{x}_{aj}} - \frac{2\left(N^{aj} - N_{00}^{aj}\right) \hat{x}_{aj}}{1 - \hat{x}_{aj}} \right].
\end{equation}
\\
Each summand in Eq. \eqref{eq:supp_Cq_def} is $2\hat{x}_{aj}$ times the left-hand side of the likelihood equation in Eq. \eqref{71}, evaluated at $\hat{x}_{aj}$. At an interior feasible joint MLE satisfying the condition above, Eq. \eqref{71} holds for every $j \in \mathcal{I}_q$. Therefore, every indirect contribution vanishes and $C_q=0$.

The likelihood equation for $w_q$ therefore reduces to the contribution from the direct probe alone.
\begin{equation}
\label{eq:supp_common_score_compact}
    \left( \frac{6 N_{00}^{ah}}{1 + 3w_q^2} - \frac{2\left(N^{ah} - N_{00}^{ah}\right)}{1 - w_q^2} \right) w_q^2 =0.
\end{equation}
Solving for $w_q$ gives
\begin{equation}
\label{eq:neweqmle2}
    \hat{w}_q
    = \sqrt{\frac{4N_{00}^{ah} - N^{ah}}{3N^{ah}}}.
\end{equation}
After obtaining the common-link estimate $\hat{w}_q$, each indirectly monitored link $e_j$ ($j \in \mathcal{I}_q$) is estimated by isolating $w_j$ in \eqref{eq:xaj} using the end-to-end MLE \eqref{eq:supp_waj_mle}:
\begin{equation}
\label{eq:supp_indirect_mle}
    \hat{w}_j
    = \frac{\hat{w}_{aj}}{\hat{w}_q}
    = \sqrt{\frac{4N_{00}^{aj} - N^{aj}}{3N^{aj} \hat{w}_q^2}},
    \qquad j \in \mathcal{I}_q.
\end{equation}
If the feasibility condition is violated, the constrained joint MLE may lie on the parameter boundary, and the decoupling need not hold.

\subsection*{Estimation Performance Analysis}
\begin{figure*}[t]
\centering
\subfloat[\label{fig3}]{\includegraphics[scale=0.25]{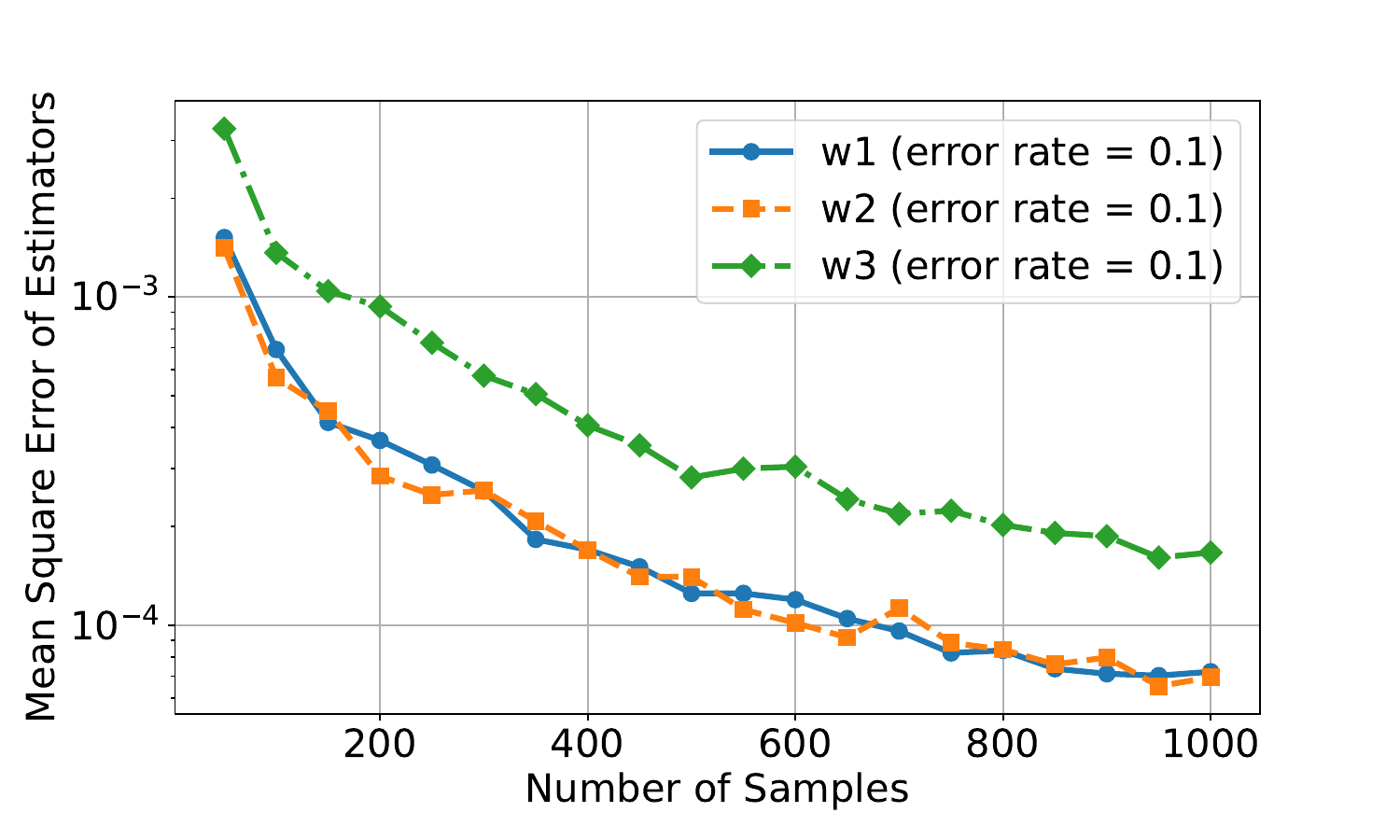}}%
\hfil
\subfloat[\label{fig4}]{\includegraphics[scale=0.25]{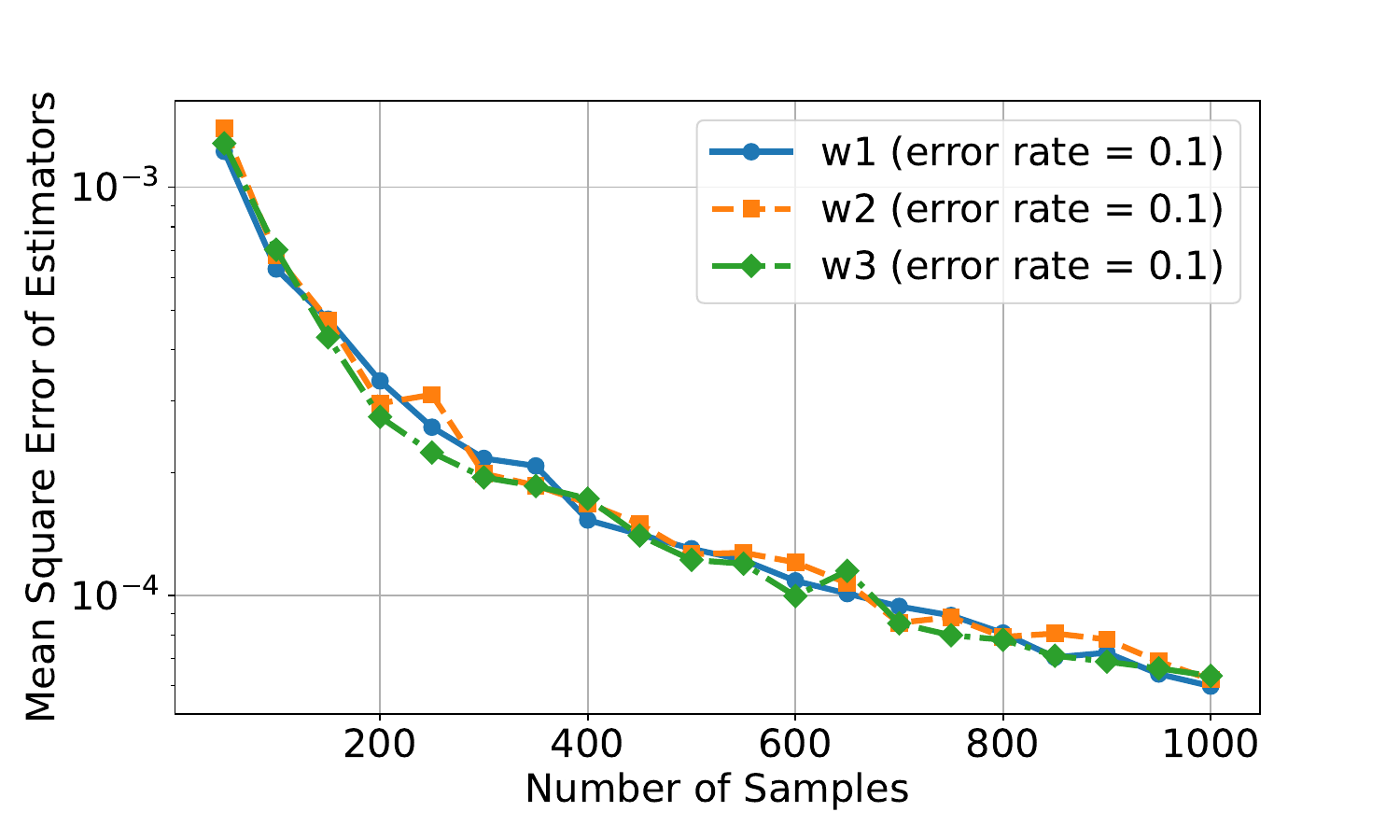}}%
\caption{Maximum Likelihood Estimator for Werner parameter: (a) Mean square error of MLE for Two Monitor Setup with error rate=0.1 for all three estimators, (b) Mean square error of MLE for Three-Monitor Setup with error rate=0.1 for all three estimators.}
\label{fig_sim_results3}
\end{figure*}
Using the derived MLE estimators and the QCRB as a benchmark, we next analyze the finite-sample performance of the maximum likelihood estimators and their convergence toward the QCRB for the four-node star network.
Figs. \ref{fig3} and \ref{fig4} present the mean square error (MSE) of the MLEs as a function of the number of samples for network configurations with two and three monitors. The values of the Werner parameters $w_1$, $w_2$, and $w_3$ are fixed at 0.9. The results indicate that the MSE decreases as the number of samples increases, demonstrating improved estimator accuracy with larger data sets. In the three-monitor setup, all estimators exhibit similar behavior and converge to the same value, suggesting uniform estimation accuracy across the network. In contrast, in the two-monitor setup, the estimator corresponding to the link without a directly connected monitor converges to a higher MSE compared to the other two estimators, which are associated with links that have direct monitor connections. This observation aligns with our theoretical expectations, confirming that the presence of a monitor directly impacts estimation accuracy. Furthermore, the observed MSE trends closely resemble those in the QCRB plots shown in Figs. \ref{figml} and \ref{figcrb}, indicating that the estimators effectively approach the lower bound set by the QCRB.

\section*{Appendix C: Additional Results } \label{a4}
\begin{figure*}[t]
\centering
\subfloat[\label{fig44a}]{\includegraphics[scale=0.28]{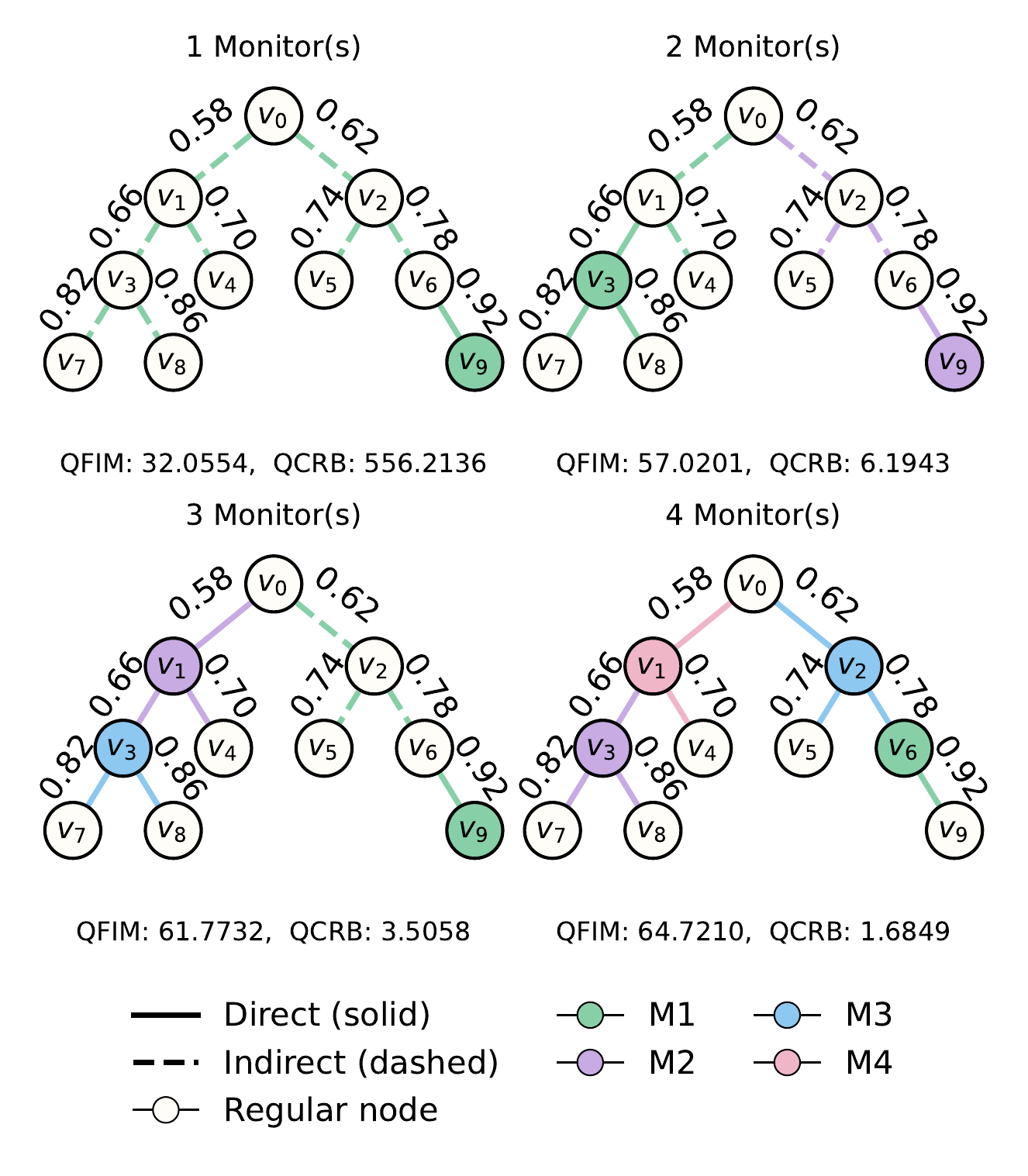}}%
\hfil
\subfloat[\label{fig44b}]{\includegraphics[scale=0.28]{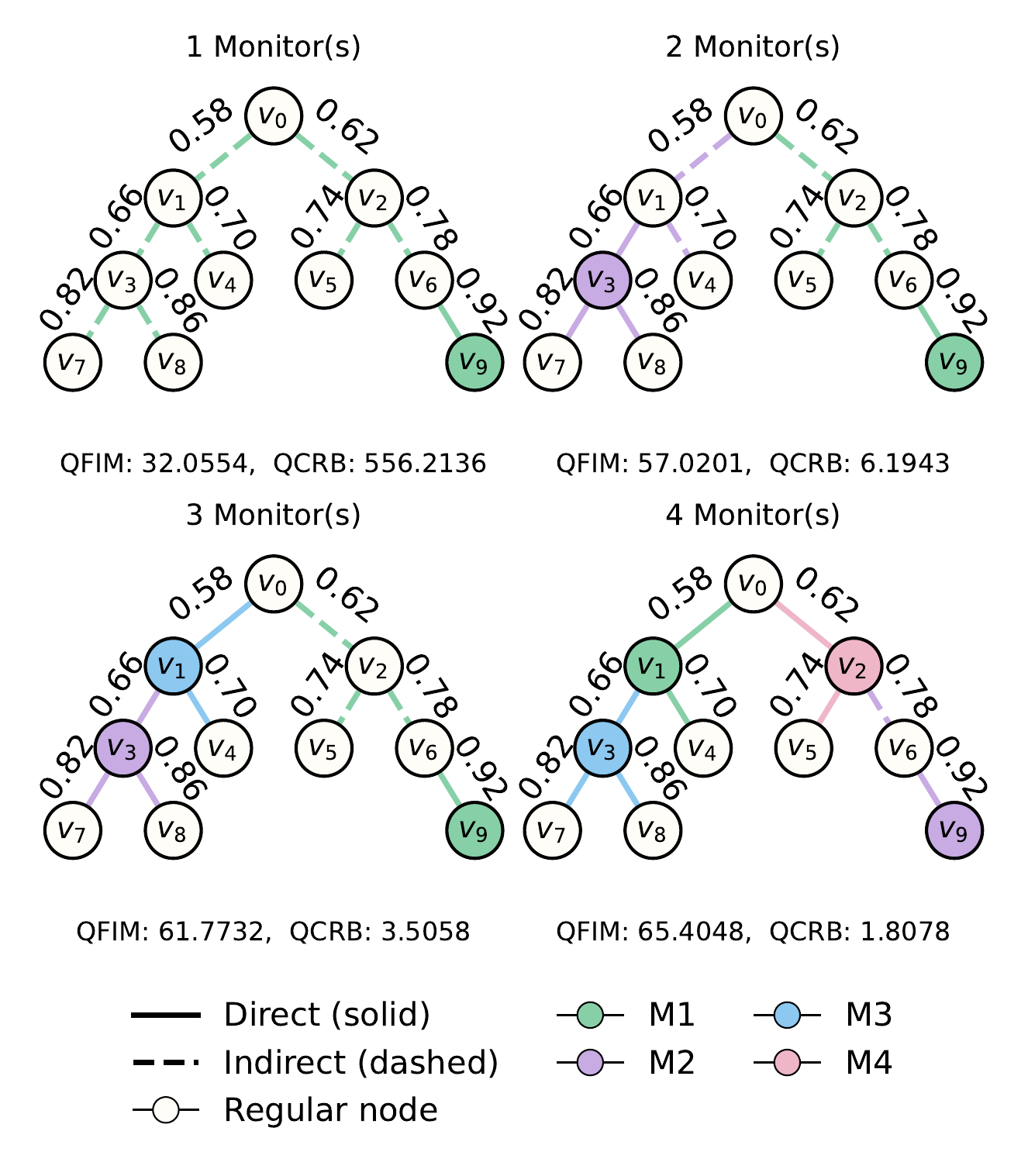}}%
\caption{Monitor placement results for the ten-node tree network with up to four monitors using Constraint (\ref{new}) instead of Constraints (\ref{lambh}) and (\ref{lambh2}) from the main text. Node colors indicate monitor placements, and edge colors reflect the monitor performing the measurement. Solid edges denote links that are directly measured, while dashed edges represent indirect measurements. All links in the tree can be directly measured using four monitors, represented by four distinct colors, with each edge color corresponding to the monitor (of the same color) responsible for measuring that link. Subfigures show results from (a) QF with constraint (\ref{9}) (b) QF without constraint (\ref{9}).}
\label{fig_sim_results333}
\end{figure*}
Now, we present the QF results obtained by replacing Constraints (\ref{lambh}) and (\ref{lambh2}) with Constraint (\ref{new}), as described in the main text. The QMF results are not shown because the link assignments remain unchanged with and without Constraint (\ref{new}). Fig. \ref{fig44a} shows the QF results with direct-monitoring enforcement. Compared with the formulation using Constraints (\ref{lambh}) and (\ref{lambh2}), the only assignment change occurs for four monitors, where node $v_6$ is assigned one link instead of two (Fig. \ref{fig4b}). However, the QFIM and QCRB values remain unchanged. Similarly, Fig. \ref{fig44b} shows the results without Constraint (\ref{9}); in this case, node $v_3$ is assigned three links instead of two (Fig. \ref{fig4d}), again with no change in the QFIM or QCRB. Overall, replacing Constraints (\ref{lambh}) and (\ref{lambh2}) with Constraint (\ref{new}) does not affect the estimation performance of the ten-node tree network, although some link assignments change.

\bmhead{Acknowledgements}
This research was supported by Research Ireland grant 21/US-C2C/3750 for CoQREATE (Convergent Quantum REsearch Alliance in Telecommunications), CONNECT-2 grant 13/RC/2077\_P2, and by the NSF-ERC Center for Quantum Networks grant EEC- 1941583.

\section*{Declarations}

\subsection*{Data Availability}
All data generated and analyzed during the study are available from the corresponding author upon reasonable request.

\subsection*{Code Availability}
The codes used for generating and analyzing data are available from the
corresponding author upon reasonable request.

\subsection*{Author contributions}
A.K.R. and M.G.A. contributed equally to this work. N.M., D.T., D.K., and I.D. provided supervision and guidance throughout this project. All authors contributed to the discussions and to the refinement of the manuscript.

\subsection*{Competing interests}
The authors declare no competing interests.

\bibliography{sn-bibliography}

\end{document}